  \providecommand\BibTeX{{%
    \normalfont B\kern-0.5em{\scshape i\kern-0.25em b}\kern-0.8em\TeX}}}
  \providecommand\BibTeX{{%
    \normalfont B\kern-0.5em{\scshape i\kern-0.25em b}\kern-0.8em\TeX}}}
\pgfplotsset{compat=1.18}
\newtheorem{problem}{\textbf{Problem}}
\newtheorem{mydef}{Definition}[section]
\newtheorem{hyp}{Hypothesis}[section]
\newcommand{\defeq}{\overset{\text{\tiny def}}{=}}
\renewcommand\footnotetextcopyrightpermission[1]{}
\definecolor{add}{rgb}{0.2,0.4,0.6}
\begin{document}

\title{Scalable Algorithm for Finding Balanced Subgraphs with Tolerance in Signed Networks}

\author{Jingbang Chen}
\authornotemark[1]
\email{j293chen@uwaterloo.ca}
\affiliation{%
  \institution{David R. Cheriton School of Computer Science, University of Waterloo}
  \city{Waterloo}
\country{Canada}
}

\author{Qiuyang Mang}
\authornotemark[1]
\email{qiuyangmang@link.cuhk.edu.cn}
\affiliation{%
  \institution{School of Data Science, The Chinese University of Hong Kong, Shenzhen}
  \city{Shenzhen}
\country{China}
}

\author{Hangrui Zhou}
\authornote{The first three authors contributed equally to this research.}
\email{zhouhr23@mails.tsinghua.edu.cn}
\affiliation{%
  \institution{Institute for Interdisciplinary Information Sciences (IIIS), Tsinghua University}
\city{Beijing}
\country{China}
}

\author{Richard Peng}
\email{yangp@cs.cmu.edu}
\affiliation{%
  \institution{Computer Science Department, Carnegie Mellon University}
  \city{Pittsburgh}
\country{USA}
}

\author{Yu Gao}
\email{ygao2606@gmail.com}
\affiliation{%
  \institution{Independent}
  \city{Beijing}
  \country{China}
\country{}
}

\author{Chenhao Ma}
\authornote{Chenhao Ma is the corresponding author.}
\email{machenhao@cuhk.edu.cn}
\affiliation{%
  \institution{School of Data Science, The Chinese University of Hong Kong, Shenzhen}
  \city{Shenzhen}
\country{China}
}
\keywords{graph mining, signed graph, dense subgraph, community detection}  



\newcommand{\todo}[1]{\textcolor{red}{[todo: #1]}}
\newcommand{\jb}[1]{\textcolor{blue}{[cjb: #1]}}
\newcommand{\yu}[1]{\textcolor{green}{[gy: #1]}}
\newcommand{\mqy}[1]{\textcolor{violet}{[mqy: #1]}}
\newcommand{\hhz}[1]{\textcolor{teal}{[hhz: #1]}}
\newcommand{\mch}[1]{\textcolor{purple}{[mch: #1]}}
\newcommand{\VI}{\texttt{VertexInsert}}
\newcommand{\VD}{\texttt{VertexDelete}}
\newcommand{\CF}{\texttt{ColorFlip}}
\newcommand{\pr}[2]{\mathsf{Pr}_{#1}\left[#2\right]}
\newcommand{\expec}[2]{\mathbb{E}_{#1}\left[#2\right]}

\begin{abstract}
 Signed networks, characterized by edges labeled as either positive or negative, offer nuanced insights into interaction dynamics beyond the capabilities of unsigned graphs. Central to this is the task of identifying the maximum balanced subgraph, crucial for applications like polarized community detection in social networks and portfolio analysis in finance. Traditional models, however, are limited by an assumption of perfect partitioning, which fails to mirror the complexities of real-world data. Addressing this gap, we introduce an innovative generalized balanced subgraph model that incorporates tolerance for imbalance. Our proposed region-based heuristic algorithm, tailored for this \textbf{NP}-hard problem, strikes a balance between low time complexity and high-quality outcomes. Comparative experiments validate its superior performance against leading solutions, delivering enhanced effectiveness (notably larger subgraph sizes) and efficiency (achieving up to 100$\times$ speedup) in both traditional and generalized contexts.
\end{abstract}

\maketitle

\section{INTRODUCTION}


Social media platforms, integral to our digital connectivity, transform interactions into analyzable social networks. By deploying graph algorithms, we discern network properties like community detection~\cite{10.1145/3589314,fortunato2010community} and partitioning~\cite{bulucc2016recent}, informing user experience improvements and recommendation systems~\cite{wu2022graph}. Yet, these platforms can also engender echo chambers that reinforce divisive ideologies, challenging democratic health. Consequently, detecting and countering polarization in social networks is a critical area of research~\cite{baumann2020modeling,nguyen2014exploring}, pivotal for developing defenses against misinformation~\cite{chitra2020analyzing,banerjee2023mitigating}.

A classical model that applies to social networks to deal with polarization is the signed graphs. The signed graph model overcomes the limitation that normal graphs cannot capture users' dispositions. Generally speaking, while the vertex set represents users, there are two kinds of edges between vertices indicating agreement or disagreement. We often refer to them as the positive and the negative edges. The signed graph model was first introduced by Harary in 1953~\cite{harary1953notion} to study the concept of \textit{balance}. The concept of \textit{balance} is important in signed graphs. Generally speaking, a signed graph is balanced if it can be decomposed into two disjoint sets such that positive edges are between vertices in the same set while negative edges are between vertices from different sets. Such a concept has many practical applications, especially in the polarization study. In a social network, a balanced graph suggests two communities exist with contrasting relationships while maintaining inner cohesion. 

\begin{figure}[t!]
\vspace{-1mm}
  \centering
  \includegraphics[width=1\linewidth]{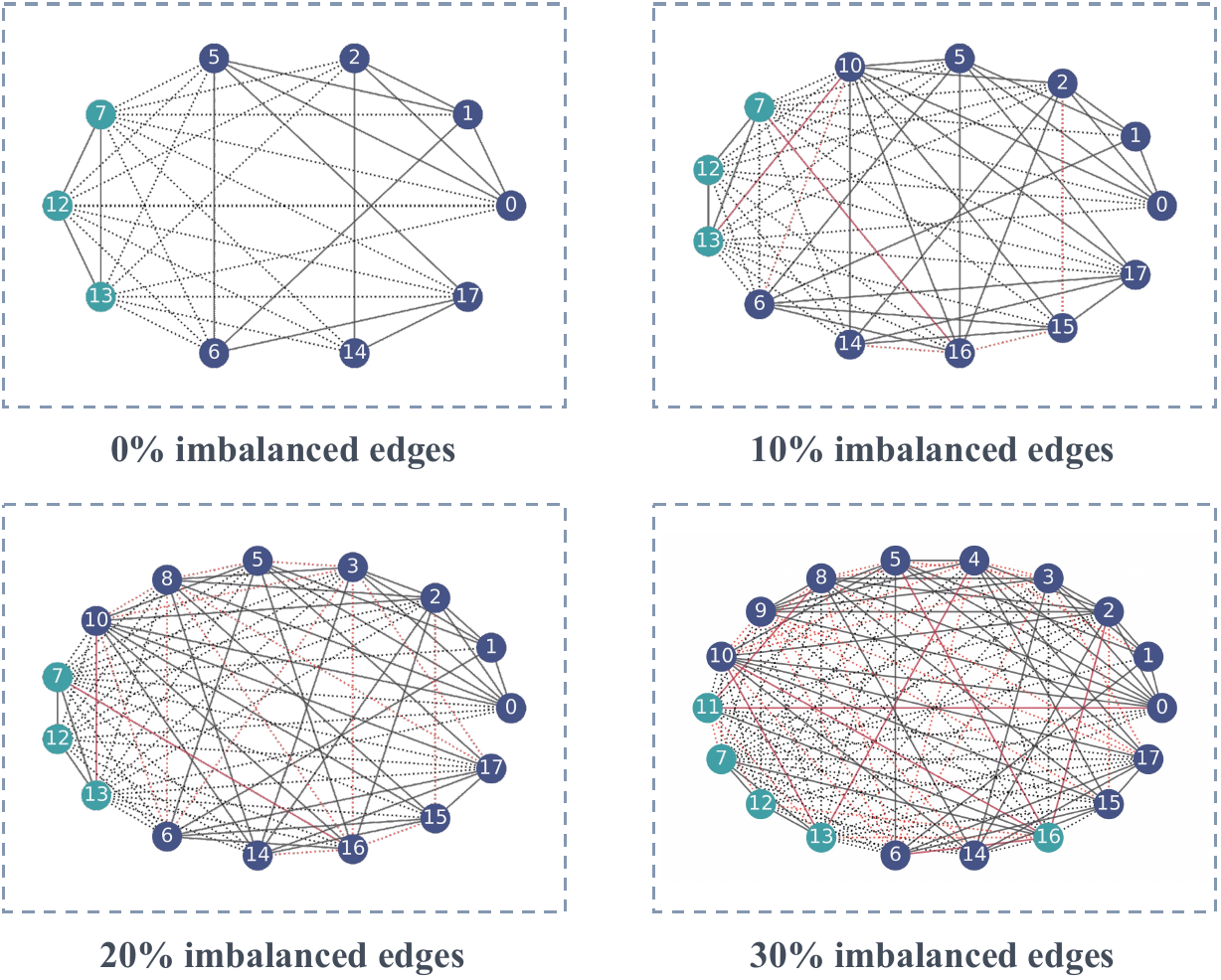}
  \caption{Balanced subgraphs are found in \textsc{Cloister} with different tolerance, where solid edges are positive, dashed edges are negative, black edges are balanced, and red edges are imbalanced.}
  \label{fig:intro}
\end{figure}

There are two lines of work when studying social network polarization with signed graphs. Since most graphs coming from real scenarios are not balanced, if we can find the maximum balanced subgraph (MBS) instead, it usually reveals the largest polarized communities along with several important properties. 

On the other line of work, instead of extracting the maximal subgraph, it focuses on removing edges to guarantee the balance of the original graph. The minimum number of edges whose deletion makes all connected components balanced is called the \textit{Frustration Index} of the given signed graph.

However, such a notion of balance is strict in that no edge can disobey the condition. In reality, such strictness does not usually appear. 
For example, though being dominated, there usually exists some voices of disagreement on the majority idea, even in the most extreme community. Besides, two individuals in different political parties might reach a consensus on certain issues. In other words, there usually exist some \textit{imbalanced} edges in signed graphs extracted from the community, which are against the strict notion of balance. Failing to handle these imbalanced edges might prevent us from identifying, extracting, and characterizing the community properly in real scenarios. We provide an example in Figure~\ref{fig:intro}\footnote{konect.cc/networks/moreno\_sampson}: As we increase the limitation for imbalanced edges, the community is getting significantly larger and denser. Therefore, we may be unable to capture the actual community if we prohibit imbalanced edges when computing.

Many questions arise here due to the many limitations: 
\begin{itemize}[leftmargin=*]
    \item Can these two lines of work be unified? 
    \item Can we develop algorithms finding communities that are non-strictly balanced? 
\end{itemize}

In this paper, we answer these questions affirmatively. We design a tailored function to measure the tolerance of the imbalance. To properly describe such tolerance, we adapt the frustration index as part of the tolerance function. In this way, we can either allow no tolerance to extract the MBS in the network or a loose tolerance to extract communities that might contain voices of disagreement. We further raise new problems based on this tolerance model and present a new region-based heuristic algorithm that computes maximal balanced subgraphs under the tolerance setting. 
Our new algorithm is versatile, delivering high-quality solutions across a range of problems based on the tolerance model, and it's also efficient, with an expected output time proportional to the size of the result.
By setting different tolerances, we utilize our algorithm to handle different tasks and compare it with the performance of state-of-the-art algorithms\footnote{Here, the SOTA algorithms are adapted to fit our new tasks under various tolerances.} on 8 real-world datasets.

Our contributions are summarized as follows:
\begin{itemize}[leftmargin=*]
    \item We introduce a novel, generalized, and practical model for identifying balanced subgraphs with tolerance in signed graphs.
    \item We have developed an efficient region-based heuristic randomized algorithm, characterized by an expected time complexity proportional to the size of the output, and coupled with a guarantee of result quality.
    \item Extensive experiments show that our algorithm consistently outperforms the baselines in terms of the quality of the returned subgraphs and achieves up to 100$\times$ speedup in terms of running time.
    \item The effectiveness of our algorithm is also evident in its application to polarized community detection, as detailed in Section~\ref{sec:2pc}.
\end{itemize}

{\bf Outline.} The rest of the paper is organized as follows. We review the related work in \Cref{sec:related}, and introduce our generalized maximum balanced subgraph model with tolerance in \Cref{sec:preliminaries}. \Cref{sec:algo} presents our region-based heuristic algorithm. Experimental results are given in \Cref{sec:eval}, and we conclude in \Cref{sec:conc}.

\section{RELATED WORK}
\label{sec:related}
\paragraph{Signed Graphs}
Signed graphs were first introduced by Harary in 1956 to study the notion of \textit{balance}~\cite{harary1953notion}. Cartwright and Harary generalized Heider's theory of balance onto signed graphs~\cite{cartwright1956structural}. Harary also developed an algorithm to detect balance in signed graphs~\cite{harary1980simple}. There are other works on studying the minimum number of sign changes to make the graph balance~\cite{akiyama1981balancing}. Spectral properties have also been studied recently. Hou et al. have studied the smallest eigenvalue in signed graphs' Laplacian~\cite{hou2003laplacian,hou2005bounds}.

Many works focus on community detection or partition in signed graphs. 
Anchuri et al. give a spectral method that partitions the signed graph into non-overlapping balanced communities~\cite{anchuri2012communities}. Doreian and Mrvar propose an algorithm for partitioning a directed signed graph and minimizing a measure of \textit{imbalance}~\cite{doreian1996partitioning}. Yang et al. give a random-walk-based method to partition into cohesively polarized communities~\cite{yang2007community}. The more recent work is by Niu et al~\cite{CohesivelyPolarized}. They leverage the balanced triangles, which model cohesion and polarization simultaneously, to design a good heuristic.

\paragraph{Maximum Balanced Subgraphs}
The Maximum Balanced Subgraphs (MBS) problem has two variants: maximizing the number of vertices (MBS-V) and edges (MBS-E). Poljak et al. give a tight lower bound in 1986~\cite{poljak1986polynomial} on the number of edges and vertices of the balanced subgraph. 
The MBS-E problem is in fact \textbf{NP}-hard since it can be formulated as a generalization of the standard \textsc{MaxCut} problem. To find the exact solution, there are some algorithms in the context of \textit{fixed-parameter tractability} (FPT) are developed~\cite{huffner2007optimal,crowston2013maximum}. More studies are on extracting large balanced subgraphs. DasGupta et al. have developed an algorithm based on semidefinite programming relaxation (SDP)~\cite{dasgupta2007algorithmic}. For the MBS-V problem, Figueiredo and Frota propose several heuristic methods in 2014~\cite{figueiredo2014maximum}.

In 2020, Ordozgoiti et al. proposed a new algorithm named \textsc{Timbal}~\cite{ordozgoiti2020finding} that extracts large balanced subgraphs regarding both vertices and edges. \textsc{Timbal} relies on signed spectral theory and the bound for perturbations of the graph Laplacian. However, Timbal is not stable and the balanced subgraph it found is sometimes small and unsatisfying as shown in our experiments. 

\paragraph{Frustration Index} The frustration index was first introduced in 1950s \cite{abelson1958symbolic,harary1959measurement}. Computing the frustration index is related to the \textsc{EdgeBipartization} problem, which requires minimization of the number of edges whose deletion makes the graph bipartite. Since \textsc{EdgeBipartization} is \textbf{NP}-hard, computing the frustration index is also \textbf{NP}-hard. The \textsc{MaxCut} is also a special case of the frustration index problem. Assuming Khot's unique games conjecture~\cite{khot2002power}, it is still \textbf{NP}-hard to approximate within any constant factor. For the non-constant factor case, there are works that produce a solution approximated to a factor of $O(\sqrt{\log n})$~\cite{agarwal2005log} or $O(\log k)$~\cite{avidor2007multi} where $n$ is the number of vertices and $k$ is the frustration index. Coleman et al. have given a review on different approximation algorithms~\cite{coleman2008local}. Hüffner et al. show that the frustration index is \textit{fixed parameter tractable} and can be computed in $O(2^k m^2)$, where $m$ is the number of edges and $k$ is the fixed parameter (the frustration index)~\cite{huffner2010separator}. There are also algorithms using binary programming models to compute the exact frustration index~\cite{aref2018computing,aref2020modeling}. 

\section{Problem Specification}
\label{sec:preliminaries}


A signed graph is an undirected simple graph $G = (V, E^{+}, E^{-})$ where $V$ is the vertex set and $E^{+}, E^{-}$ represent the positive and negative signed edge sets. We first give the formal definition of balanced graphs as follows, which is the same as the previous works~\cite{ordozgoiti2020finding, CohesivelyPolarized}.
Note that we and these works require the graph to be connected for the community detection proposal. 

\begin{mydef}[Balanced Graph]
\label{def:balanced}
    Given a signed graph $G = (V, E^{+}, E^{-})$, $G$ is balanced if $G$ is connected and there exists a partition $V = V_1 \cup V_2, V_1 \cap V_2 = \emptyset$ such that for each edge $(i, j) \in E^{+}$, vertices $i$ and $j$ belong to the same set within $V_1$ and $V_2$, while for each edge $(i, j) \in E^{-}$, they belong to the different sets. 
\end{mydef}

Graphs are usually not balanced, especially when they are from practical scenarios. 
Therefore, people turn to study to find the maximum balanced subgraph (MBS) from the given graph. There are usually two variants of problems: maximizing the number of vertices~\cite{ordozgoiti2020finding} and edges~\cite{dasgupta2007algorithmic}. 

\begin{problem}[MBS-V]
    \label{p1}
    Given a signed graph $G = (V, E^{+}, E^{-})$, find the graph $G' = (V', E'^{+}, E'^{-})$ induced by $V' \subseteq V$ such that $G'$ is balanced and $|V'|$ is maximized.
\end{problem}
\begin{problem}[MBS-E]
    \label{p2}
    Given a signed graph $G = (V, E^{+}, E^{-})$, find the graph $G' = (V', E'^{+}, E'^{-})$ induced by $V' \subseteq V$ such that $G'$ is balanced and $|E'^{+} \cup E'^{-}|$ is maximized.
\end{problem}

\begin{figure}[t!]
\vspace{-1mm}
  \centering
  \includegraphics[width=0.9\linewidth]{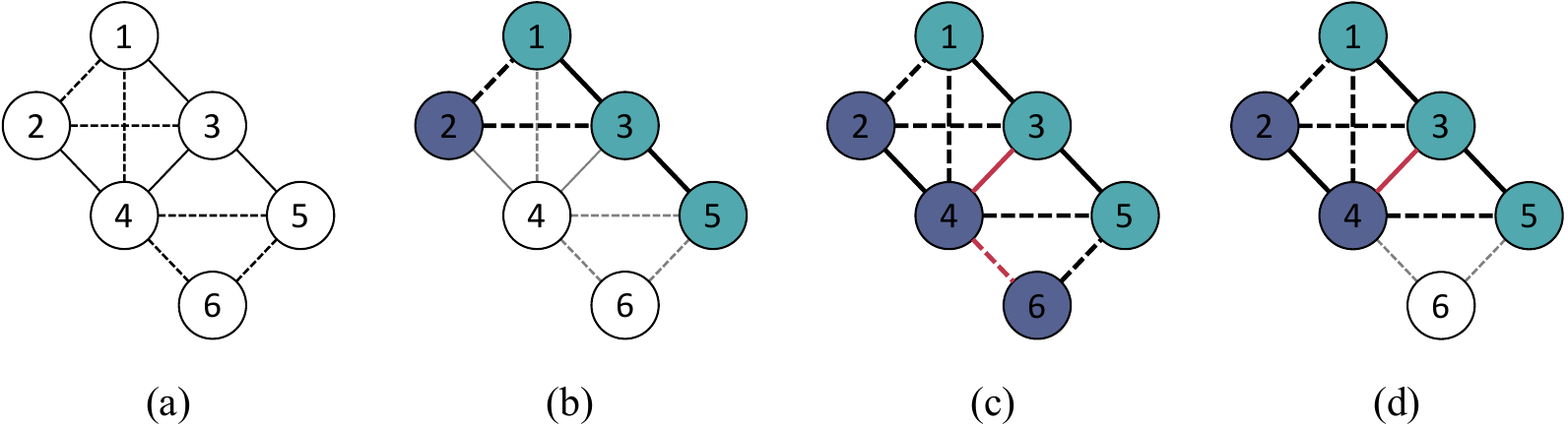}
  \caption{A signed graph (a), its MBS-V and MBS-E (b), its TMBS-V and TMBS-E (c), and its $\beta$-TMBS (d) with $\beta=\frac{1}{3}$.}
  \label{fig:toygraphs}
\end{figure}

We illustrate the concepts of MBS-V and MBS-E with an example in Figure~\ref{fig:toygraphs}, along with other problems to be discussed below. Here, we give a graph with 6 vertices numbered from $1$ to $6$ where solid edges are positive and dashed edges are negative, presented in (a). When solving \textsc{Problem}~\ref{p1} or ~\ref{p2} on this graph, the optimal subgraph is constructed by vertex $1,2,3,5$, where $V_1=\{1,3,5\}$ and $V_2=\{2\}$. They are painted in two different colors in (b). 

Another well-known problem in this topic is computing the minimum number of edges whose deletion makes all connected components balanced. This minimum amount of the edge removal is called the \textit{Frustration Index} of the given signed graph $G$, denoted as $L(G)$.

Now, we are ready to introduce the concept of \textit{Balance with $\beta$-tolerance} in signed networks, which is central to our paper.
\begin{mydef}[Balanced Graph under $\beta$-Tolerance]
   Given a signed graph $G = (V, E^{+}, E^{-})$ and a tolerance parameter $\beta \in (0, 1]$, $G$ is balanced under $\beta$-Tolerance if $G$ is connected and $\frac{L(G)}{|E^{+} \cup E^{-}|} \leq \beta$.
\end{mydef}

In other words, when a graph is balanced under $\beta$-tolerance, it implies that by removing at most $\beta |E^{+} \cup E^{-}|$ edges from it, we can ensure that all connected components of the remaining graph become strictly balanced.
Expanding upon the original definition of balance, this concept allows for the tolerance of a maximum of $\beta |E^{+} \cup E^{-}|$ edges that disobey the partitioning of polarized communities.

Under such $\beta$-tolerance restriction, finding the maximum balanced subgraph is generalized into the following two problems.

\begin{problem}[TMBS-V]
    \label{p3}
    Given a signed graph $G = (V, E^{+}, E^{-})$ and a tolerance parameter $\beta \in (0, 1]$, find the graph $G' = (V', E'^{+}, E'^{-})$ induced by $V' \subseteq V$ such that $G'$ is balanced under $\beta$-Tolerance and $|V'|$ is maximized.
\end{problem}
\begin{problem}[TMBS-E]
    \label{p4}
    Given a signed graph $G = (V, E^{+}, E^{-})$ and a tolerance parameter $\beta \in (0, 1]$, find the graph $G' = (V', E'^{+}, E'^{-})$ induced by $V' \subseteq V$ such that $G'$ is balanced under $\beta$-Tolerance and $|E'^{+} \cup E'^{-}|$ is maximized.
\end{problem}


When $\beta$ becomes any value between $(0, \frac{1}{|E^{+}\cup E^{-}|})$, \textsc{Problem}~\ref{p3} and \textsc{Problem}~\ref{p4} are equivalent to the \textsc{Problem}~\ref{p1} and \textsc{Problem}~\ref{p2} respectively. 
However, unlike the strictly balanced graph, even deciding whether a signed graph satisfies $\beta$-tolerance is hard. Formally, we have the following lemma.
The proof is deferred to Appendix~\ref{appendix:sec3}.

\begin{lemma}
\label{lemma:hardness}
Given a signed graph $G = (V, E^{+}, E^{-})$, deciding whether $G$ is balanced under $\beta$-tolerance cannot be done within polynomial time for tolerance parameter $\beta > \frac{1}{|E^{+}\cup E^{-}|}$, unless \textbf{P $=$ NP}. 
\end{lemma}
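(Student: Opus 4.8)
The plan is to separate the decision problem into an easy part and a hard part, and then reduce a classical \textbf{NP}-hard problem into the hard part. Writing $m = |E^{+} \cup E^{-}|$, deciding balance under $\beta$-tolerance amounts to checking (i) that $G$ is connected and (ii) that $L(G) \le \beta m$. Part (i) is solvable in linear time, so all the difficulty sits in (ii), which is exactly a threshold (decision) version of computing the frustration index. Since the excerpt already recalls that computing $L(G)$ is \textbf{NP}-hard because \textsc{EdgeBipartization} is \textbf{NP}-hard and \textsc{MaxCut} is a special case, the natural route is a reduction from \textsc{MaxCut} that forces the threshold $\beta m$ to coincide with the frustration index while keeping $\beta$ strictly above $1/m$.

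First I would use the all-negative reduction. Given a \textsc{MaxCut} instance consisting of a connected graph $H$ with $m$ edges and a target $t$, build the signed graph $G$ on the same vertex and edge set with $E^{+} = \emptyset$ and every edge negative. A negative edge is satisfied only when its endpoints lie on opposite sides of the partition, so a spanning subgraph of $G$ is balanced exactly when it is bipartite with that bipartition serving as the cut; consequently $L(G) = m - \mathrm{MaxCut}(H)$, the minimum number of edges left uncut. I would then set $\beta = (m-t)/m$, which lies in $(0,1]$, so that $G$ is balanced under $\beta$-tolerance if and only if $L(G) \le m - t$, i.e.\ if and only if $\mathrm{MaxCut}(H) \ge t$. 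The reduction is clearly polynomial, and because $\beta m = m - t$ is an integer there is no rounding subtlety in comparing the integer $L(G)$ against $\beta m$.

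Next I would verify that the produced instances genuinely fall in the claimed regime $\beta > 1/m$. This holds precisely when $t \le m-2$, and \textsc{MaxCut} remains \textbf{NP}-hard when restricted to such targets (interesting targets satisfy $t > m/2$ on connected non-bipartite graphs, so $\beta = (m-t)/m$ sits strictly between $1/m$ and $1/2$). Membership in \textbf{NP} is immediate, since a deletion set of size at most $\beta m$ together with a valid bipartition of the remaining graph is a polynomially checkable certificate; combined with the reduction this yields the stated conditional lower bound. The main obstacle I anticipate is not the reduction itself but pinning down the threshold behavior: one must confirm that hardness survives specifically in the window $\beta > 1/m$ rather than collapsing into the trivial strict-balance case $\beta < 1/m$, where $L(G) \le \beta m$ forces $L(G) = 0$ and the problem becomes polynomial. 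The all-negative \textsc{MaxCut} gadget is exactly what guarantees $L(G) = \Theta(m)$, so that the associated $\beta$ is bounded away from $1/m$ and the hardness lands in the intended regime.
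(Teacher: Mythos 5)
Your proof is correct, but it takes a genuinely different route from the paper's. Writing $m = |E^{+}\cup E^{-}|$, the paper gives a Turing reduction from the problem of \emph{computing} the frustration index: if the $\beta$-tolerance decision were polynomial, then by querying it with $\beta = k/m$ for varying $k$ and binary-searching, one could compute $L(G)$ exactly in polynomial time, contradicting the cited \textbf{NP}-hardness of that computation. You instead give a many-one (Karp) reduction directly from \textsc{MaxCut}: the all-negative gadget with the identity $L(G) = m - \mathrm{MaxCut}(H)$ and the threshold $\beta = (m-t)/m$ maps each \textsc{MaxCut} instance to a single tolerance query. Your approach buys several things: it is more self-contained (it needs only \textsc{MaxCut} hardness rather than the frustration-index computation result as a black box), it is a many-one reduction and hence also yields \textbf{NP}-completeness once you note the certificate, and it explicitly verifies that the hard instances land in the window $\beta > 1/m$ --- a point the paper's binary-search argument glosses over, since its smallest query $\beta = 1/m$ falls outside the stated regime (harmlessly, as thresholds $k \le 1$ are polynomial anyway, but your treatment makes this explicit). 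What the paper's argument buys in exchange is brevity and the fact that it simultaneously covers the entire family of thresholds $k$ in one stroke, rather than only those arising from \textsc{MaxCut} targets. One small point to tighten in your write-up: your claim that \textsc{MaxCut} remains hard with $t \le m-2$ deserves the one-line justification that the cases $t \in \{m-1, m\}$ are polynomial-time decidable (bipartiteness, possibly after deleting one edge), so hardness of the unrestricted problem must reside in the restricted range.
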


Therefore, it is difficult to find large balanced graphs with tolerance and we cannot adapt any previous algorithms on MBS or other related problems. Meanwhile, there is another limitation that is worth mentioning. If $\beta \geq \frac{L(G)}{|E^{+} \cup E^{-}|}$, \textsc{Problem}~\ref{p3} and \textsc{Problem}~\ref{p4} has a trivial optimal solution: the largest connected component. That is to say, in such loose tolerance restriction, the optimal solution to these two problems might not reflect the polarized community. The previous showcase given in Figure~\ref{fig:toygraphs} (c) is a typical example: When we solve \textsc{Problem}~\ref{p3} and \ref{p4} with $\beta=\frac{1}{3}$ in such graph, we get the whole graph. However, vertex $6$ is in fact a noise that only has negative edges connected to both $V_1$ and $V_2$. Therefore, it should not be included in any meaningful subgraph solution. 

To encounter the challenges and limitations, we first propose \textsc{Tolerant Balance Index}, a novel metric to evaluate signed graphs under the given tolerance for the balance. 

\begin{mydef}[Tolerant Balance Index (TBI)]
    Given a signed graph $G = (V, E^{+}, E^{-})$ and a tolerance parameter $\beta \in (0, 1]$, define its Tolerant Balance Index $\Phi(G, \beta)$ as the value of $|E^{+}\cup E^{-}| - \frac{L(G)}{\beta}$. 
\end{mydef}

Correspondingly, we propose the following problem as the main task to solve throughout the paper.

\begin{problem}[$\beta$-TMBS]
    \label{p5}
    Given a signed graph $G$ and a tolerance parameter $\beta \in (0, 1]$, find the graph $G'$ induced by $V' \subseteq V$ such that $G'$ is connected and $\Phi(G', \beta)$ is maximized.
\end{problem}

It is easy to show, for any balanced graph $G$ under $\beta$-tolerance, $\Phi(G, \beta) \geq 0$. To discuss the usage of solving \textsc{Problem}~\ref{p5}, we also want to address that maximizing the number of edges (\textsc{Problem}~\ref{p2} and~\ref{p4}) approximates the solution for maximizing the number of vertices (\textsc{Problem}~\ref{p1} and~\ref{p3}). That is to say, these two variants of MBS problems are related. 

\begin{lemma}
\label{lemma:samequestion}
    Given a signed graph $G$, the solutions to \textsc{Problem}~\ref{p2} and \textsc{Problem}~\ref{p4} are $\frac{1}{\Delta}$-approximations for the \textsc{Problem}~\ref{p1} and \textsc{Problem}~\ref{p3} respectively, where $\Delta$ is the maximum degree of vertices in $G$. 
\end{lemma}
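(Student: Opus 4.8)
The plan is to exploit the fact that \textsc{Problem}~\ref{p1} and \textsc{Problem}~\ref{p2} optimize over exactly the same feasible family --- the connected balanced subgraphs of $G$ --- and differ only in their objective (vertices versus edges); likewise \textsc{Problem}~\ref{p3} and \textsc{Problem}~\ref{p4} share the feasible family of connected subgraphs balanced under $\beta$-tolerance. Since the two cases are structurally identical, I would prove the comparison once for a generic feasible family $\mathcal{F}$ of \emph{connected} subgraphs and then instantiate it twice. Write $G_V^\star$ for a vertex-optimal member of $\mathcal F$ with $n_V = |V(G_V^\star)|$, and $G_E^\star$ for an edge-optimal member with $n_E = |V(G_E^\star)|$ and $m_E = |E(G_E^\star)|$; the goal is $n_E \geq \tfrac{1}{\Delta} n_V$, i.e.\ that $G_E^\star$, read as a candidate vertex solution, attains at least a $\tfrac{1}{\Delta}$ fraction of the vertex optimum.

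First I would record the two elementary structural bounds that drive everything. Because every member of $\mathcal F$ is connected, a spanning tree gives $|E(H)| \geq |V(H)| - 1$ for every $H \in \mathcal F$. Because the maximum degree of $G$ is $\Delta$ and subgraph degrees never exceed it, the handshake identity $2|E(H)| = \sum_{v \in V(H)} \deg_H(v) \leq \Delta\,|V(H)|$ yields $|V(H)| \geq \tfrac{2}{\Delta}|E(H)|$.

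Then I would chain these with the optimality of $G_E^\star$. Since $G_V^\star$ is itself feasible for the edge objective, edge-optimality gives $m_E \geq |E(G_V^\star)| \geq n_V - 1$, where the second inequality is the spanning-tree bound applied to $G_V^\star$. Feeding this into the degree bound for $G_E^\star$ gives
\begin{equation*}
n_E \;\geq\; \frac{2 m_E}{\Delta} \;\geq\; \frac{2(n_V - 1)}{\Delta} \;\geq\; \frac{n_V}{\Delta},
\end{equation*}
where the last step uses $n_V \geq 2$. This establishes the claimed approximation for \textsc{Problem}~\ref{p2} against \textsc{Problem}~\ref{p1}, and applying the identical argument to the $\beta$-tolerance family settles \textsc{Problem}~\ref{p4} against \textsc{Problem}~\ref{p3}.

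I expect the only genuinely delicate point to be the additive $-1$ slack coming from connectivity: the clean factor $\tfrac{1}{\Delta}$ survives only because the handshake bound carries a factor of $2$, which absorbs the additive loss precisely once $n_V \geq 2$. I would therefore make explicit the harmless nontriviality assumption that the optimal subgraph has at least two vertices (automatic whenever $G$ has any edge, since a single edge already induces a connected balanced subgraph on two vertices), and note that the whole argument is agnostic to whether balance is strict or $\beta$-tolerant, since it invokes only connectivity and the global degree bound --- nothing about the partition or the frustration index itself.
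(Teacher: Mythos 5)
Your proof is correct and takes essentially the same route as the paper's own: both compare the edge-optimal solution against the vertex-optimal one by chaining a connectivity bound, the maximum-degree (handshake) bound, and edge-optimality, and both handle the $\beta$-tolerant case by observing the argument only uses connectivity and degrees. If anything, your version is slightly more careful than the paper's, which asserts $m_1 \geq n_1$ directly from connectivity (off by one when the optimal subgraph is a tree), whereas you correctly use $|E(H)| \geq |V(H)| - 1$ and absorb the additive slack via the factor $2$ in $|V(H)| \geq \tfrac{2}{\Delta}|E(H)|$, under the harmless assumption that the vertex optimum has at least two vertices.
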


We defer the proof of Lemma~\ref{lemma:samequestion} to Appendix~\ref{appendix:sec3}. Considering the aforementioned interrelationship between two variants, we propose that finding the large connected subgraph with maximal $\Phi(G, \beta)$ (\textsc{Problem}~\ref{p5}) can be considered as a good approximation of the solutions to \textsc{Problem}~\ref{p3} and \textsc{Problem}~\ref{p4} simultaneously. In addition, since \textsc{Problem}~\ref{p5} takes both the size and the polarity into account, it mitigates the limitation of large $\beta$. When we solve \textsc{Problem}~\ref{p5} with $\beta=\frac{1}{3}$ on Figure~\ref{fig:toygraphs} (a), the optimal subgraph will be vertices excluded vertex $6$, where $V_1=\{1,3,5\}$ and $V_2=\{2,4\}$, presented in (d). As it is shown, the noise vertex $6$ is not mistakenly selected into the optimal subgraph we try to find. Therefore, the polarity is being preserved. 

In Section~\ref{sec:algo}, we will present an efficient and effective algorithm for \textsc{Problem}~\ref{p5}. Then in Section~\ref{sec:eval}, we will demonstrate the experimental results to support our proposal.

\SetAlCapNameFnt{\small}
\SetAlCapFnt{\small}

\section{ALGORITHM}
\label{sec:algo}
We propose \textit{\underline{R}egion-based \underline{H}euristic (\textsc{RH})}, a new algorithm that searches for large balanced subgraphs with the $\beta$-tolerance restriction (\textsc{Problem}~\ref{p5}). 
Our algorithm runs in a given signed graph $G = (V, E^{+}, E^{-})$, where $E^{+}, E^{-}$ represent the positive and negative signed edge sets. The tolerance parameter $\beta$ is also given as input.

\subsection{Relaxation}\label{sec:relaxation}
It is not easy to compute $\Phi(G,\beta)$ directly. Instead, we propose an alternative method to approximate it from the lower end while guaranteeing the tolerance condition ($\Phi(G,\beta)\ge 0$) is not violated. 

For each vertex in $G$, we assign a color in $\{0,1\}$. Only vertices of the same color can be grouped in the same community. We denote any color assignment of $G$ as $\mathcal{X}$. Such definition is aligned with the previous work~\cite{aref2020modeling}. Now, we are ready to give the formal definition of our newly proposed \textit{Tolerant Balanced Count}.

\begin{mydef}[Tolerant Balance Count (TBC)]
Given a signed graph $G = (V, E^{+}, E^{-})$ under coloring $\mathcal{X}$ and a tolerance parameter $\beta \in (0, 1]$, define its Tolerant Balance Count as
\begin{equation*}
\small
\begin{aligned}
    \hat{\Phi}(G, \beta, \mathcal{X}) = |E^{+} \cup E^{-}| - \frac{1}{\beta}\sum_{(i, j) \in E^{+}} \mathbbm{1}\{x_i \neq x_j\} 
    -\frac{1}{\beta} \sum_{(i, j) \in E^{-}} \mathbbm{1}\{x_i = x_j\}
\end{aligned}
\end{equation*}
\end{mydef}

The following lemma states that $\Phi(G,\beta)$ is, in fact, the upper bound of $\hat{\Phi}(G,\beta,\mathcal{X})$ with respect to different coloring $\mathcal{X}$. The proof is deferred to Appendix~\ref{appendix:sec4}.
We also have the following corollary that ensures the $\beta$-tolerance requirement is not violated. 

\begin{lemma}
\label{lemma:upperbound}
Given a signed graph $G = (V, E^{+}, E^{-})$ and a tolerance parameter $\beta \in (0, 1]$, we have $\Phi(G, \beta) = \max_{\mathcal{X}} \hat{\Phi}(G, \beta, \mathcal{X})$.
\end{lemma}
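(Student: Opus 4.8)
The plan is to prove the identity $\Phi(G,\beta) = \max_{\mathcal{X}} \hat{\Phi}(G,\beta,\mathcal{X})$ by unwinding both definitions and relating the frustration index $L(G)$ to the coloring formulation. Recall that $\Phi(G,\beta) = |E^+ \cup E^-| - \frac{1}{\beta} L(G)$, where $L(G)$ is the frustration index, while $\hat{\Phi}(G,\beta,\mathcal{X}) = |E^+ \cup E^-| - \frac{1}{\beta}\bigl(\sum_{(i,j)\in E^+}\mathbbm{1}\{x_i \neq x_j\} + \sum_{(i,j)\in E^-}\mathbbm{1}\{x_i = x_j\}\bigr)$. Since the term $|E^+ \cup E^-|$ and the factor $\frac{1}{\beta} > 0$ are common to both and independent of $\mathcal{X}$, maximizing $\hat{\Phi}$ over $\mathcal{X}$ is equivalent to \emph{minimizing} the bracketed sum. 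So the entire statement reduces to the claim
\begin{equation*}
L(G) = \min_{\mathcal{X}} \left( \sum_{(i,j)\in E^+}\mathbbm{1}\{x_i \neq x_j\} + \sum_{(i,j)\in E^-}\mathbbm{1}\{x_i = x_j\} \right).
\end{equation*}

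The heart of the argument, then, is to show this combinatorial identity: the frustration index equals the minimum number of edges that are \emph{violated} by the best two-coloring, where a positive edge is violated when its endpoints receive different colors and a negative edge is violated when its endpoints receive the same color. I would prove this by two inequalities. For the direction $L(G) \le \min_{\mathcal{X}}(\cdots)$, I would take any coloring $\mathcal{X}$ achieving the minimum, let $F$ be its set of violated edges, and argue that deleting $F$ leaves a graph in which every remaining edge respects the partition $V_1 = \{v : x_v = 0\}$, $V_2 = \{v : x_v = 1\}$; by Definition~\ref{def:balanced} each connected component of $G \setminus F$ is then balanced, so $L(G) \le |F|$. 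For the reverse direction $L(G) \ge \min_{\mathcal{X}}(\cdots)$, I would take an optimal edge set $F^*$ realizing the frustration index, so that every component of $G \setminus F^*$ is balanced; each component admits a valid $\{0,1\}$-coloring by balance, and these can be combined into a global coloring $\mathcal{X}^*$ whose only violated edges lie in $F^*$, giving $\min_{\mathcal{X}}(\cdots) \le |F^*| = L(G)$.

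The main obstacle I anticipate is handling connectivity carefully in the second direction. The frustration index is defined via making \emph{all connected components} balanced, and Definition~\ref{def:balanced} builds connectedness into balance; so after deleting $F^*$ the graph may split into several components, each independently two-colorable, and I must be sure the per-component colorings can be glued into a single coloring of all of $V$ without introducing violations beyond $F^*$. Since the only edges between distinct components of $G \setminus F^*$ are themselves in $F^*$, any color choice on separate components affects only edges already counted in $F^*$, so the gluing is free and introduces no new violations — but this point should be stated explicitly rather than assumed. A secondary subtlety is the treatment of isolated vertices or singleton components, which are vacuously balanced and contribute no constraints; the coloring identity absorbs these without issue.

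Once the combinatorial identity $L(G) = \min_{\mathcal{X}}(\cdots)$ is established, the lemma follows immediately: substituting back, $\max_{\mathcal{X}} \hat{\Phi}(G,\beta,\mathcal{X}) = |E^+\cup E^-| - \frac{1}{\beta}\min_{\mathcal{X}}(\cdots) = |E^+\cup E^-| - \frac{1}{\beta}L(G) = \Phi(G,\beta)$. I expect the proof to be short and largely definitional once the key reformulation of $L(G)$ as a minimum-violation coloring is made precise; no heavy machinery is needed beyond the two-sided comparison of edge-deletion sets and colorings.
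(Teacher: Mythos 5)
Your proposal is correct and follows essentially the same route as the paper's proof: both reduce the identity to the combinatorial claim that $L(G)$ equals the minimum number of coloring-violated edges, and both establish the two inequalities via the same correspondence between edge-deletion sets and colorings. If anything, your treatment is slightly more careful than the paper's, since you explicitly justify gluing the per-component colorings (noting that all cross-component edges already lie in $F^*$), a step the paper's tightness argument passes over implicitly.
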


\begin{corollary}\label{corol:count}
Given a signed graph $G$ and a tolerance parameter $\beta$, $G$ is balanced under $\beta$-tolerance if there exists a coloring $\mathcal{X}$ such that $\hat{\Phi}(G, \beta, \mathcal{X}) \geq 0$.
\end{corollary}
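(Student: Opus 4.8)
The plan is to prove Corollary~\ref{corol:count} as a direct consequence of Lemma~\ref{lemma:upperbound}, since the corollary is essentially a reformulation of the tolerance condition in terms of the existence of a good coloring. First I would recall the definitions: a graph $G$ is balanced under $\beta$-tolerance precisely when $\frac{L(G)}{|E^{+} \cup E^{-}|} \leq \beta$, which is equivalent to $|E^{+} \cup E^{-}| - \frac{L(G)}{\beta} \geq 0$, i.e. $\Phi(G, \beta) \geq 0$. So the goal reduces to showing that the existence of a coloring $\mathcal{X}$ with $\hat{\Phi}(G, \beta, \mathcal{X}) \geq 0$ implies $\Phi(G, \beta) \geq 0$.

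This implication is immediate from Lemma~\ref{lemma:upperbound}, which asserts $\Phi(G, \beta) = \max_{\mathcal{X}} \hat{\Phi}(G, \beta, \mathcal{X})$. Indeed, if some particular coloring $\mathcal{X}_0$ satisfies $\hat{\Phi}(G, \beta, \mathcal{X}_0) \geq 0$, then the maximum over all colorings is at least this value, so $\Phi(G, \beta) = \max_{\mathcal{X}} \hat{\Phi}(G, \beta, \mathcal{X}) \geq \hat{\Phi}(G, \beta, \mathcal{X}_0) \geq 0$. Combining this with the equivalence above, $\Phi(G, \beta) \geq 0$ means $\frac{L(G)}{|E^{+} \cup E^{-}|} \leq \beta$, which is exactly the statement that $G$ is balanced under $\beta$-tolerance.

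The argument is essentially a one-line deduction once Lemma~\ref{lemma:upperbound} is in hand, so there is no real obstacle here; the corollary is designed to extract an easily checkable sufficient condition from the lemma. The one point worth stating carefully is the algebraic equivalence between $\Phi(G, \beta) \geq 0$ and the defining inequality of $\beta$-tolerance, which holds because $\beta > 0$ and $|E^{+} \cup E^{-}| > 0$, so multiplying through by $\beta$ preserves the direction of the inequality. I would present the proof as exactly this chain of implications, noting that the value of the corollary lies in replacing the intractable quantity $L(G)$ (whose computation underlies the hardness in Lemma~\ref{lemma:hardness}) with the concrete, efficiently evaluable count $\hat{\Phi}(G, \beta, \mathcal{X})$ for a witness coloring.
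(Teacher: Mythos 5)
Your proof is correct and takes essentially the same route the paper intends: the corollary is stated as an immediate consequence of Lemma~\ref{lemma:upperbound}, via $\Phi(G, \beta) = \max_{\mathcal{X}} \hat{\Phi}(G, \beta, \mathcal{X}) \geq \hat{\Phi}(G, \beta, \mathcal{X}_0) \geq 0$ together with the algebraic equivalence $\Phi(G, \beta) \geq 0 \Leftrightarrow \frac{L(G)}{|E^{+} \cup E^{-}|} \leq \beta$, which is exactly your chain of deductions. The only caveat, shared equally by the paper's statement, is that balance under $\beta$-tolerance also requires $G$ to be connected, a condition that does not follow from the existence of a coloring and must be assumed from context.
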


In this way, although we cannot compute $\Phi(G,\beta)$ directly, we can approximate the actual value by accumulating the non-negative $\hat{\Phi}(G, \beta, \mathcal{X})$ values of various coloring. Additionally, by Corollary~\ref{corol:count}, we can guarantee the $\beta$-tolerance during the whole computation. Such relaxation plays an important role in our algorithm, and the experimental results in Section~\ref{sec:eval} also validate its effectiveness.

\subsection{Local Search}\label{sec:localsearch}
Our search process in Algorithm~\ref{alg:hillclimbing} starts from an initial vertex $s$ instead of the whole vertices set. We will discuss how to choose $s$ wisely in the later Section~\ref{sec:sampling}.

\paragraph{Search Operations}As discussed in Section~\ref{sec:relaxation}, we will search for a connected subgraph $G'$ and a coloring $\mathcal{X}$ such that $\hat{\Phi}(G', \beta, \mathcal{X})$ is as large as possible. In the following, if not specified, we use $\hat{\Phi}$ to denote the tolerant balanced count of the subgraph. Throughout the search process, we use a set $S$ to store the selected vertices and the coloring simultaneously. Specifically, $S$ stores selected tuples $(x,c)$, where $(x,c)$ represents a vertex $x$ colored as $c$. There are three basic operations with $S$ that will insert, delete, or change the color of a vertex $x$ respectively. Note that these three operations well define the neighborhood of any solution we find. 

\begin{itemize}[leftmargin=*]
    \item $\VI(x,c)$: Execute $S \leftarrow S\cup(x,c)$.
    \item $\VD(x)$: Let $c$ be the color of $x$. Execute $S \leftarrow S\ \backslash\ (x,c)$. 
    \item $\CF(x)$: Let $c$ be the color of $x$ and $\overline{c}$ be $c$'s opposite color. Execute $\VD(x)$ and $\VI(x,\overline{c})$ in order.
\end{itemize}

Since our search begins from $s$, we initialize $S$ to be $\{(s,0)\}$ (line~\ref{search:init}). We use $\hat{\Phi}_{opt}$ and $\hat{\Phi}_{cur}$ to store the maximum $\hat{\Phi}$ that we have found and the current $\hat{\Phi}$ respectively. 
Initially, they are initialized to be $0$ (line~\ref{search:init}). 
During our process, if we choose to execute $\VI(x,c)$ or $\CF(x)$ for some vertex $x$, we will greedily execute one with the maximal increment of $\hat{\Phi}$ that it can contribute. Therefore, we use two max-heaps $MH_{I},MH_{F}$ to assist. 
After being initialized (line~\ref{search:initheap}), since $s$ is selected, for each neighbor $v$ of $s$, we calculate the corresponding contribution to $\hat{\Phi}$ when $\VI(v,0)$ or $\VI(v,1)$ are executed and insert them to $MH_{I}$ (line~\ref{search:mhiinit1} to ~\ref{search:mhiinit2}). 
For $MH_F$, since only $s$ is selected, we calculate the contribution for $\CF(x)$ and insert it into $MH_F$ (line~\ref{search:mhfinit}). 

Here, we do not use any structure to store the contribution when deleting any vertex from $S$. Instead, whenever we want to find the optimal deletion, we can compute for every vertex in $S$ altogether with a total cost of $O(n)$, for which we may need to use the well-known Tarjan algorithm~\cite{tarjan} to guarantee the subgraph stays connected after the deletion. 
We denote such process as $\texttt{DelEval}(S)$.

Our search method is to execute one of the three operations repeatedly. If we only consider inserting vertices from $S$'s neighbors, the total number of $\VI$ is bounded by the size of the graph. However, since we also have the other two operations, which are \textit{non-incremental}, if we do not design a proper termination strategy, the time complexity might become exponential. 
Specifically, we have two parameters that help to define the termination strategy of the algorithm: A float number $p \in (0,1)$ denotes the \textit{non-incremental probability} and an integer $T$ to limit the number of potentially wasted operations. 

\paragraph{Operation Selection} We first discuss how to select an operation each time. Here, we will use the non-incremental probability $p$ to help determine. 
We use $op$ to denote the operation we select and $\Delta_{\hat{\Phi}}$ to denote the corresponding contribution value to $\hat{\Phi}$. 
These two variables are initialized by acquiring the best $\VI$ operation from $MH_I$ (line~\ref{search:getvi}), indicating choosing an insertion. 
Then, we generate a random float number $z$ from uniform distribution $U(0,1)$. 
If $z<p$, we also consider using a $\CF$ operation: We acquire the best $\CF$ operation from $MH_F$ and update the two variables if the corresponding $\Delta_{\hat{\Phi}}$ is larger than the current one (line~\ref{search:getcf1} to ~\ref{search:getcf2}).
Similarly, we regenerate $z$ and if $z< \frac{p\log |S|}{|S|}$, we try choosing a $\VD$ operation: After calling $\texttt{DelEval}(S)$ to recalculate for every possible vertex deletion, we choose the best among them and try updating (line~\ref{search:getvd1} to ~\ref{search:getvd3}). 
In this way, we choose an $\VI$ operation by default, and with some probability, we check if the current optimal $\CF$ and $\VD$ can contribute more. The two probabilities are by design and help to balance between accuracy and efficiency.


We execute the current $op$ operation after the selection phase and also update the current $\hat{\Phi}_{cur}$. Since $S$ is updated after the execution, we must also update $MH_I$ and $MH_F$ correspondingly. This is done in a similar way as the initialization (line~\ref{search:updates1} to ~\ref{search:updates5}).

\paragraph{Search Termination}
When the current selected operation enlarges the current $\hat{\Phi}_{cur}$, we denote such operation as a \textit{progressive} one. Otherwise, it is \textit{non-progressive}. We use a parameter $T$ to prevent too many \textit{non-progressive} operations. Specifically, our search will terminate when the number of \textit{non-progressive} operations exceeds $T$ times the number of \textit{progressive} operations. 
We implement such a strategy with a counter $t$. After we select and execute an operation (line~\ref{search:getvi} to ~\ref{search:updates5}), we compare the current $\hat{\Phi}_{cur}$ with $\hat{\Phi}_{opt}$. If $\hat{\Phi}_{cur}$ is smaller, we decrease $t$ by $1$. Otherwise, we update $\hat{\Phi}_{opt}$ and increase $t$ by $T$ (line~\ref{search:updatet1} to ~\ref{search:updatet2}). In this way, if $t<0$, the search should terminate (line~\ref{search:loop}). In addition, we also terminate the search when $S$ contains all vertices in $G$ since no further insertion can be executed (line~\ref{search:loop}).
After the search, we undo the last few non-progressive operations to retract the optimal subgraph we have found (line~\ref{search:undo}). We return with this subgraph as $\overline{H}$ and its corresponding $\hat{\Phi}$ (line~\ref{search:return}).

\begin{algorithm}[t]
\caption{\textsc{RH: Search}}\label{alg:hillclimbing}
\KwIn{Signed graph $G$; Tolerance parameter $\beta$; Non-incremental probability $p$; Initial vertex $s$; Early stop turn limit $T$.}
\KwOut{$\overline{H} \subseteq G$: the found balanced subgraph with $\beta$-tolerance; $\hat{\Phi}$: the lower bound of $\overline{H}$'s balance value we achieved.}
$S \leftarrow \{(s, 0)\}$, $\hat{\Phi}_{opt} \leftarrow 0$, $\hat{\Phi}_{cur} \leftarrow 0$, $t \leftarrow T$; \\\label{search:init}
Initialize max-heaps $MH_{I},  MH_{F}$; \\ \label{search:initheap}
Insert $s$ into $MH_F$; \\  \label{search:mhfinit}
\For{$v \in N(s)$}{ \label{search:mhiinit1}
Insert $(v, 0)$ and $(v, 1)$ into $MH_I$; \label{search:mhiinit2}
}

\While {$t \geq 0 \text{ and } |S| < |V(G)|$}{ \label{search:loop}
    $\{op, \Delta_{\hat{\Phi}}\} = \texttt{GetTopNode}(MH_I)$\tcp*{ordered by $\Delta_{\hat{\Phi}}$} \label{search:getvi}
    \If{$z \sim U(0, 1) < p$}{ \label{search:getcf1}
        $\{op, \Delta_{\hat{\Phi}}\} \leftarrow \max(\{op, \Delta_{\hat{\Phi}}\}, \texttt{GetTopNode}(MH_F)$); \label{search:getcf2}
    }
    \If{$z \sim U(0, 1) < \frac{p\log |S|}{|S|}$}{ \label{search:getvd1}
        $\{\hat{op}, \overline{\Delta_{\hat{\Phi}}}\} \leftarrow$ \texttt{DelEval}$(S)$; \\ \label{search:getvd2}
        $\{op, \Delta_{\hat{\Phi}}\} \leftarrow \max(\{op, \Delta_{\hat{\Phi}}\}, \{\hat{op}, \overline{\Delta_{\hat{\Phi}}}\})$; \label{search:getvd3}
    }
    $S \leftarrow $ Execute operation $op$ on $S$; \\ \label{search:updates1}
    $x \leftarrow$ the vertex of the $op$; \\ \label{search:updates2}
    \For{$v \in N(x) \cup x$}{ \label{search:updates3}
        Insert, update or delete $v$'s operations in $MH_{I}$ and $MH_{F}$; \\ \label{search:updates4}
    }
    $\hat{\Phi}_{cur} \leftarrow \hat{\Phi}_{cur} + \Delta_{\hat{\Phi}}$; \\ \label{search:updates5}
    \lIf{$\hat{\Phi}_{cur} \leq \hat{\Phi}_{opt}$}{$t \leftarrow t - 1$} \label{search:updatet1}
    \lElse{
        $\hat{\Phi}_{opt} \leftarrow \hat{\Phi}_{cur}$, $t \leftarrow t + T$ \label{search:updatet2}
    }

}
Undo the last non-progressive operations on $S$; \\ \label{search:undo}
\Return $\overline{H} = G[S], \hat{\Phi} = \hat{\Phi}_{opt}$; \label{search:return}
\end{algorithm}

\subsection{Region-based Sampling}
\label{sec:sampling}
In the previous section, we propose a search process that starts from an arbitrary vertex $s$. It is reasonable to foresee that the choice of $s$ might affect the result significantly. If we start only from too few vertices, our result in the end might be some local maximal solutions, which would be much worse than the global maximal one. One of the solutions is to enumerate all possible $s$, \textit{i.e.}, all vertices in $G$. However, such pure enumeration may result in excessively high time complexity. To balance between performance and efficiency, we propose a \textit{Region-based Sampling} strategy.

Our sampling method is mainly based on two hypotheses. The first hypothesis indicates that the probability of finding a nearly optimal subgraph is high if we are able to select some vertices in the optimal subgraph as the starting vertex.  Here, the `optimal' denotes the solution we found by enumerating all vertices as the starting vertex. We formally state such a hypothesis as follows.

\begin{hyp}
\label{h1}
Given a signed graph $G$ and a tolerance parameter $\beta$, suppose the optimal subgraph found by Algorithm~\ref{alg:hillclimbing} starting with vertex $x$ is $G_x$, and the optimal graph among all $G_x$ is $G_{opt}$.
For the given $\epsilon$, there exists a subset $V' \subseteq V(G_{opt})$ with $\frac{|V'|}{|V(G_{opt})|} \geq \frac{1}{2}$ such that
$\Phi(G_{x}, \beta) \geq (1-\epsilon) \Phi(G_{opt}, \beta), \forall x \in V'$.
\end{hyp}

Another hypothesis describes the relation between $\hat{\Phi}$ values returned by two different calls of Algorithm~\ref{alg:hillclimbing}, if we select different starting vertex. We argue that if the return $\hat{\Phi}$ value is larger, the found subgraph will likely be bigger. We formally state such a hypothesis as follows.

\begin{hyp}
\label{h2}
Given a signed graph $G$ and a tolerance parameter $\beta$, suppose the optimal subgraph and coloring found by Algorithm~\ref{alg:hillclimbing} starting with vertex $a$ are $G_a$ and $\mathcal{X}_a$, starting with $b$ are $G_b$ and $\mathcal{X}_b$.
If $\hat{\Phi}(G_a, \beta, \mathcal{X}_a) \geq \hat{\Phi}(G_b, \beta, \mathcal{X}_b)$, we have $|V(G_b)| \leq 2|V(G_a)|$.
\end{hyp}

With these two hypotheses, we have the following lemma that describes a sampling strategy that is able to find a $(1-\epsilon)$-optimal subgraph within an acceptable number of calls of the search process.
The proof is deferred to Appendix~\ref{appendix:sec4}.

\begin{lemma}
\label{lemma:sample}
    Given a signed graph $G$, a tolerance parameter $\beta$, and a positive $\epsilon < 1$, suppose we run Algorithm~\ref{alg:hillclimbing} in $k$ iterations, where the $i$-th iteration starts with a uniformly sampled vertex $x_i \in V(G)$, and the optimal subgraph found is $G_{i}$.
    If \textsc{Hypothesis}~\ref{h1} and \textsc{Hypothesis}~\ref{h2} hold, the expected number of iterations that find a $(1-\epsilon)$-optimal subgraph $\expec{}{X}$ is $\Omega(\sum_{i=1}^{k} |V(G_i)| / |V(G)|)$.
\end{lemma}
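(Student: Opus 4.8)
The plan is to bound $\mathbb{E}[X]$ from below by linearity of expectation over the $k$ independent iterations, and then convert the resulting expression into one involving $\sum_i |V(G_i)|$ by invoking the two hypotheses in sequence. First I would write $X = \sum_{i=1}^{k} X_i$, where $X_i$ is the indicator that iteration $i$ returns a $(1-\epsilon)$-optimal subgraph, i.e.\ a $G_i$ with $\Phi(G_i,\beta) \ge (1-\epsilon)\Phi(G_{opt},\beta)$. By linearity, $\mathbb{E}[X] = \sum_{i=1}^{k} \Pr[X_i = 1]$, so it suffices to lower bound each single-iteration success probability and then recombine.

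For the single-iteration bound I would invoke Hypothesis~\ref{h1}. Since $x_i$ is sampled uniformly from $V(G)$, and Hypothesis~\ref{h1} guarantees a set $V' \subseteq V(G_{opt})$ of \emph{good} seeds with $|V'| \ge \tfrac{1}{2}|V(G_{opt})|$, each of which already yields a $(1-\epsilon)$-optimal result, we get $\Pr[X_i = 1] \ge \Pr[x_i \in V'] = |V'|/|V(G)| \ge |V(G_{opt})|/(2|V(G)|)$. Summing over the $k$ iterations then gives $\mathbb{E}[X] \ge k\,|V(G_{opt})|/(2|V(G)|)$.

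The remaining step is to replace $k\,|V(G_{opt})|$ by $\Theta\!\left(\sum_i |V(G_i)|\right)$, which is exactly where Hypothesis~\ref{h2} enters. Taking $G_{opt}$ to be optimal with respect to the quantity Algorithm~\ref{alg:hillclimbing} actually maximizes, so that $\hat{\Phi}(G_{opt},\beta,\mathcal{X}_{opt}) \ge \hat{\Phi}(G_i,\beta,\mathcal{X}_i)$ for every $i$, Hypothesis~\ref{h2} applied with $a = opt$ and $b = i$ yields $|V(G_i)| \le 2|V(G_{opt})|$ for all $i$. Summing gives $\sum_{i=1}^{k} |V(G_i)| \le 2k\,|V(G_{opt})|$, hence $k\,|V(G_{opt})| \ge \tfrac{1}{2}\sum_{i=1}^{k} |V(G_i)|$. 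Substituting into the bound from the previous paragraph gives $\mathbb{E}[X] \ge \tfrac{1}{4|V(G)|}\sum_{i=1}^{k} |V(G_i)|$, which is the claimed $\Omega\!\left(\sum_{i=1}^{k} |V(G_i)|/|V(G)|\right)$.

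I expect the main subtlety, rather than any heavy computation, to be reconciling the two objective functions used by the hypotheses: Hypothesis~\ref{h1} is phrased in terms of $\Phi$, while Hypothesis~\ref{h2} is phrased in terms of $\hat{\Phi}$. I would fix a single consistent notion of the optimal $G_{opt}$ (the maximizer of the value returned by the search) and use Lemma~\ref{lemma:upperbound} to relate $\Phi$ and $\hat{\Phi}$ at the point where the two hypotheses are chained, making sure the $G_{opt}$ appearing in the success-probability step and in the size-comparison step is the same graph. Once that alignment is pinned down, the rest is a direct application of linearity of expectation together with the stated quality guarantee (Hypothesis~\ref{h1}) and size guarantee (Hypothesis~\ref{h2}).
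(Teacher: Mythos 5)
Your proposal is correct and follows essentially the same route as the paper's own proof: both lower-bound the per-iteration success probability by $\frac{|V(G_{opt})|}{2|V(G)|}$ via Hypothesis~\ref{h1}, use Hypothesis~\ref{h2} (with $G_{opt}$ as the better graph) to get $\sum_{i=1}^{k}|V(G_i)| \le 2k\,|V(G_{opt})|$, and multiply the two bounds to obtain $\expec{}{X} \ge \frac{1}{4}\sum_{i=1}^{k}|V(G_i)|/|V(G)|$. Your closing remark about reconciling $\Phi$ and $\hat{\Phi}$ across the two hypotheses is a point the paper's proof silently glosses over, so making it explicit is a small improvement rather than a deviation.
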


\begin{algorithm}[t]
\caption{\textsc{RH: Sampling}}\label{alg:main}
\KwIn{Signed graph $G$; Tolerance parameter $\beta$; Iteration constant $\mathcal{C}$; Non-incremental probability $p$; Early stop turn limit $T$.}
\KwOut{$H \subseteq G$: the found balanced subgraph with $\beta$-tolerance.}
$H \leftarrow \emptyset$; \\ \label{sampling:inith}
$\hat{\Phi}_{opt} \leftarrow 0$; \\ \label{sampling:initphi}
$TotalSize \leftarrow 0$; \\ \label{sampling:inittotal}
\While {$TotalSize < \mathcal{C}|V(G)|$}{ \label{sampling:loop}
    $s \leftarrow$ Sample a vertex from $V(G)$; \\ \label{sampling:sample}
    $\overline{H}, \hat{\Phi} \leftarrow$ \texttt{Search}($G, \beta, s, p, T$); \\ \label{sampling:search}
    \If{$\hat{\Phi} > \Phi_{opt}$ }{ \label{sampling:update1}
          $\hat{\Phi} \leftarrow \hat{\Phi}_{opt}$; \\ \label{sampling:update2}
          $H \leftarrow \overline{H}$; \label{sampling:update3}
    }
    $TotalSize \leftarrow TotalSize + |V(\overline{H})|$; \label{sampling:updatetotalsize}
}
\Return $H$; \label{sampling:return}
\end{algorithm}

We provide an implementation of such sampling strategy in Algorithm~\ref{alg:main}, which is, in fact, an application of Lemma~\ref{lemma:sample}.
We use $H$, $\hat{\Phi}_{opt}$ to keep track of the current optimal subgraph and its corresponding $\hat{\Phi}$ value. They are initialized to be $\emptyset$ and $0$ in the beginning (line~\ref{sampling:inith} to ~\ref{sampling:initphi}). We also use a variable $TotalSize$ to keep track of the total size of all return subgraphs from Algorithm~\ref{alg:hillclimbing}, which is also initialized to be $0$ (line~\ref{sampling:inittotal}).

For each iteration, we randomly select a vertex $s$ (line~\ref{sampling:sample}) as the starting vertex and pass it into Algorithm~\ref{alg:hillclimbing} (line~\ref{sampling:search}). After receiving the result from Algorithm~\ref{alg:hillclimbing}, we update $H$, $\hat{\Phi}_{opt}$ (line~\ref{sampling:update2} to ~\ref{sampling:update3}) if the newly found subgraph is better (line~\ref{sampling:update1}). Before the new iteration, we accumulate the size of the newly found subgraph into $TotalSize$ (line~\ref{sampling:updatetotalsize}. 
The whole process will stop when $TotalSize \geq \mathcal{C}|V(G)|$ (line~\ref{sampling:loop}). By Lemma~\ref{lemma:sample}, we can set a proper termination condition by accumulating the subgraph size from each search process. More specifically, When $TotalSize$ reaches $\mathcal{C}|V(G)|$, it is expected to find a nearly optimal solution $\mathcal{C}$ times.
In the end, we return $H$ as the main result (line~\ref{sampling:return}).

\paragraph{Time Complexity Analysis} 
We show that the time complexity of Algorithm~\ref{alg:main} is $O(\Delta (\mathcal{C} + 1)n\log^2 n)$, where $n$ is the number of vertices in $G$.
We start with the time complexity of Algorithm~\ref{alg:hillclimbing}.

\begin{lemma}
Suppose that the loop starting on Line~\ref{search:loop} repeats for a total of $t$ times. Let $h$ be the size of $S$ on Line~\ref{search:return}. Then we have, with high probability, $h=\Omega(t/\log n)$ and that the running time of Algorithm~\ref{alg:hillclimbing} is $O(\Delta t \log n)$ in expectation.
\label{lemma:hillclimbingtime}
\end{lemma}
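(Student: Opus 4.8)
The plan is to treat the two assertions separately: the per-call running time, which follows from a step-by-step cost accounting over one pass of the loop body, and the high-probability lower bound $h=\Omega(t/\log n)$ on the output size, which follows from a charging argument over the three operation types combined with the termination bookkeeping.

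For the running time I would bound the expected cost of a single iteration (\Cref{search:getvi}--\Cref{search:updates5}) and multiply by $t$ using linearity of expectation. The heaps $MH_I$ and $MH_F$ only ever store frontier entries, so their size is $O(n)$ and each \texttt{GetTopNode} or update costs $O(\log n)$. Executing an operation and repairing the heaps touches only $N(x)\cup\{x\}$ (\Cref{search:updates3}--\Cref{search:updates4}), i.e. at most $\Delta$ incident entries, for a cost of $O(\Delta\log n)$. The only expensive primitive is \texttt{DelEval}, which runs Tarjan's algorithm on the subgraph induced by $S$ and evaluates every candidate deletion at cost $O(\Delta\lvert S\rvert)$; crucially it is invoked only when the coin on \Cref{search:getvd1} succeeds, with probability $\frac{p\log\lvert S\rvert}{\lvert S\rvert}$. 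Hence its expected per-iteration contribution is $\frac{p\log\lvert S\rvert}{\lvert S\rvert}\cdot O(\Delta\lvert S\rvert)=O(\Delta\log\lvert S\rvert)=O(\Delta\log n)$, which is exactly the point of the gating probability. Summing the resulting $O(\Delta\log n)$ expected per-iteration cost over $t$ iterations yields $O(\Delta t\log n)$.

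For the size bound, the first step is to convert the loop count $t$ into a count of \emph{progressive} steps via the counter. Writing $P$ and $N$ for the numbers of progressive and non-progressive iterations, the counter starts at $T$, rises by $T$ on each progressive step (\Cref{search:updatet2}) and falls by $1$ on each non-progressive step (\Cref{search:updatet1}), and the loop halts exactly when it first drops below $0$; since it can only cross $0$ downward by a unit decrement, one gets $N=T(P+1)+1$ and hence the clean identity $t=(P+1)(T+1)$, so $P=\Theta(t)$ for constant $T$ (if instead the loop stops because $\lvert S\rvert=\lvert V(G)\rvert$, then $h=n$ and the bound is immediate). Next I would account for $\lvert S\rvert$: a \CF{} leaves it unchanged while a \VI{}/\VD{} changes it by $\pm1$, so the returned size equals $1$ plus the net excess of insertions over deletions along the prefix retained after the undo on \Cref{search:undo}. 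I then upper bound the two ``wasteful'' operation types. The number of flips is at most the number of successes of the independent $\mathrm{Bernoulli}(p)$ trials on \Cref{search:getcf1}, which is $O(pt)$ with high probability by a Chernoff bound. The number of deletions is at most the number of successes of the coins on \Cref{search:getvd1}; because $\frac{\log\lvert S\rvert}{\lvert S\rvert}$ is bounded both by an absolute constant and by $\frac{\log n}{\lvert S\rvert}$, these also number $O(pt)$ with high probability. Taking $p$ a sufficiently small constant, the net insert-minus-delete count over the run is $\Omega(t)$, which in particular is $\Omega(t/\log n)$ and gives the claim.

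The main obstacle is twofold. First, the deletion indicators are \emph{not} independent: the success probability $\frac{p\log\lvert S\rvert}{\lvert S\rvert}$ on \Cref{search:getvd1} depends on the current $\lvert S\rvert$, which is itself determined by the random history, so the Chernoff bound must be replaced by a concentration inequality for a sum of dependent Bernoulli variables, e.g. a supermartingale comparison that dominates each conditional deletion probability by the constant bound on $\frac{\log\lvert S\rvert}{\lvert S\rvert}$. Second, one must argue that the undo on \Cref{search:undo} does not erase the accumulated net growth, i.e. that the state restored at the last progressive step still has size $\Omega(t/\log n)$ rather than only the larger pre-undo size; here I would control the trailing non-progressive run through the counter value at the last progressive step and absorb the remaining discrepancy into the $\log n$ slack. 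It is precisely this slack that lets the clean $\Omega(t)$ expectation survive as a high-probability $\Omega(t/\log n)$ bound once the dependence and the undo are handled conservatively.
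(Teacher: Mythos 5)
Your running-time half is exactly the paper's argument: each iteration pays $O(\Delta\log n)$ for heap maintenance, and \texttt{DelEval} costs $O(\Delta|S|)$ but is gated by a coin of probability $\frac{p\log|S|}{|S|}$, so it contributes only $O(\Delta\log n)$ in expectation; summing over $t$ iterations gives $O(\Delta t\log n)$. That part is correct, and your bookkeeping identity $t=(P+1)(T+1)$ for the progressive/non-progressive counter, together with your attention to the undo step and to the dependence of the deletion coins on the history, are legitimate details that the paper's own write-up glosses over.

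The gap is in the size bound, and it is not the dependence issue you flagged but a more basic one: your charging argument bounds the deletion rate only by the \emph{uniform} constant $\max_x \frac{\log x}{x}=\Theta(1)$, and therefore concludes a net growth of $t\bigl(1-O(p)\bigr)$, which forces you to assume ``$p$ a sufficiently small constant.'' But $p$ is an input parameter of Algorithm~\ref{alg:hillclimbing}, not a quantity the analysis is free to shrink; the lemma must hold for any fixed $p\in(0,1)$, and the paper's experiments run with $p=0.8$. In the worst case over gain comparisons (which is all your upper bounds on flips and deletions give you), an insertion is guaranteed only when both coins fail, an event of probability $(1-p)\bigl(1-p\tfrac{\log|S|}{|S|}\bigr)$; once $p$ exceeds a constant around $1/2$ your lower bound on insertions minus deletions becomes negative and the argument collapses. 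The paper's Lemma~\ref{lemma:stop} succeeds for every $p\in(0,1)$ precisely by exploiting the structure your bound discards: the deletion probability $p\frac{\log x}{x}$ \emph{decays} in $x=|S|$, so beyond a constant threshold $x_0(p)$ (defined by $\frac{\log x_0}{x_0}=\frac{1-p}{3p-p^2}$) the forced-insertion probability exceeds twice the deletion probability, yielding per-step drift at least $\frac{1-p}{3-p}$; a gambler's-ruin argument keeps the walk above $x_0$ with constant probability, Hoeffding's inequality concentrates the increments, and restarting the argument over $\Theta(\log n)$ intervals upgrades constant probability to high probability. That restart structure is exactly where the $1/\log n$ factor in the statement comes from --- so the $\log n$ ``slack'' you hoped would absorb the difficulty is not where the difficulty lives (indeed, your claimed $\Omega(t)$ would be \emph{stronger} than the lemma); the missing ingredient is a drift argument that is uniform in $p$, and without it the proof does not establish the statement for the algorithm as actually parameterized.
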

\begin{proof}
By Lemma~\ref{lemma:stop}, with high probability, $h=\Omega(t/\log n)$. 

Let $\Delta$ be the maximum degree of $G$. The running time of each iteration is dominated by the cost of updating the heaps and the time for \texttt{DelEval($S$)}. Updating the heaps cost $O(\Delta \log n)$ time. The function \texttt{DelEval($S$)} costs $O(\Delta|S|)$ time. It is called with probability $\frac{p\log |S|}{|S|}$ in each iteration. Thus, its expected runtime in each iteration is $O(\Delta \log n)$.
We have the running time of Algorithm~\ref{alg:hillclimbing} is $t\cdot O(\Delta \log n)=O(\Delta t\log n).$
\end{proof}

\begin{lemma}
With high probability, the expected time complexity of Algorithm~\ref{alg:main} is $O(\Delta (\mathcal{C} + 1)n\log^2 n)$.
\end{lemma}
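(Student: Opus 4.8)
The plan is to charge the running time of the whole sampling procedure against the quantity $TotalSize$ that controls its termination on Line~\ref{sampling:loop}, using Lemma~\ref{lemma:hillclimbingtime} to express each individual search cost in terms of the size of the subgraph that search returns. Let the main loop of Algorithm~\ref{alg:main} run $k$ times, and for the $i$-th call to Algorithm~\ref{alg:hillclimbing} let $t_i$ be the number of iterations of its inner loop and $h_i = |V(\overline{H})|$ the size of the returned subgraph. The total running time is $\sum_{i=1}^{k}(\text{cost of call } i)$, so the argument reduces to two ingredients: a per-call cost bound of $O(\Delta h_i \log^2 n)$, and a bound on $\sum_i h_i$.

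For the per-call cost I would invoke Lemma~\ref{lemma:hillclimbingtime} directly: the expected cost of call $i$ is $O(\Delta t_i \log n)$, and with high probability $h_i = \Omega(t_i/\log n)$, i.e.\ $t_i = O(h_i \log n)$; substituting gives $O(\Delta h_i \log^2 n)$. For the aggregate size I would read off the loop guard on Line~\ref{sampling:loop} together with the update on Line~\ref{sampling:updatetotalsize}: since the loop continues only while $TotalSize < \mathcal{C}|V(G)|$, before the final call we have $\sum_{i=1}^{k-1} h_i < \mathcal{C}n$, and as any returned subgraph has $h_k \leq n$, this yields the deterministic bound $\sum_{i=1}^{k} h_i < (\mathcal{C}+1)n$. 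Summing the per-call costs then gives $O(\Delta \log^2 n)\sum_i h_i = O(\Delta(\mathcal{C}+1)n\log^2 n)$, which is the claimed bound. As a by-product, since every returned subgraph has $h_i \geq 1$, the number of calls obeys $k < (\mathcal{C}+1)n$, keeping the whole computation polynomial.

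The main obstacle is the careful bookkeeping between the two kinds of randomness that Lemma~\ref{lemma:hillclimbingtime} mixes: the per-call cost is stated in expectation over the internal coin flips deciding whether \texttt{DelEval} runs, whereas $t_i = O(h_i\log n)$ holds only with high probability. To combine them cleanly I would first define the high-probability events $\mathcal{E}_i = \{t_i = O(h_i\log n)\}$, take a union bound over the $k < (\mathcal{C}+1)n = \mathrm{poly}(n)$ calls so that all $\mathcal{E}_i$ hold simultaneously with high probability, and only then take expectations of the per-call costs; on this event the expected total cost is $\sum_i O(\Delta h_i\log^2 n)$, controlled by $\sum_i h_i < (\mathcal{C}+1)n$. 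The delicate point is that $k$ is itself random, so the union bound ranges over a random --- though almost surely polynomially bounded --- index set; folding this into a statement that holds with high probability in expectation, rather than a strict worst-case guarantee, is precisely why the lemma is phrased the way it is.
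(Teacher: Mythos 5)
Your proposal is correct and follows essentially the same route as the paper's own proof: both charge each call's expected cost $O(\Delta t_i \log n)$ via Lemma~\ref{lemma:hillclimbingtime}, convert $t_i$ to $h_i$ using the high-probability bound $h_i = \Omega(t_i/\log n)$, and cap $\sum_i h_i \leq (\mathcal{C}+1)n$ from the termination condition on Line~\ref{sampling:loop}. Your extra care in separating the expectation over the \texttt{DelEval} coin flips from the high-probability event (union-bounding over the polynomially many calls) is a cleaner treatment of a point the paper's proof leaves implicit, but it does not change the argument.
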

\begin{proof}
Suppose we call Line~\ref{sampling:search} for $k$ times. For the $i$-th call, let $t_i$ be the number of iterations of the loop on Line~\ref{search:loop} and let $h_i$ ($1\le i\le k$) be the size of $S$ on Line~\ref{search:return}.
By Lemma~\ref{lemma:hillclimbingtime}, the total time complexity in expectation is $O(\sum_{i=1}^k \Delta t_i \log n)$, and the sum of $h_i$ satisfies $\sum_{i=1}^k h_i=\Omega(\sum_{i=1}^k t_i/\log n).$
Since $\sum_{i=1}^k h_i \leq (\mathcal{C} + 1)n$, we have that the total runtime of Algorithm~\ref{alg:main} in expectation is $O(\sum_{i=1}^k \Delta t_i \log n) = O(\Delta (\mathcal{C} + 1)n\log^2 n)$.
\end{proof}
\section{EVALUATION}
\label{sec:eval}
In this section, we address the following research questions to evaluate various important aspects of our algorithm:
\begin{itemize}[leftmargin=*]
    \item \textbf{RQ1 (Effectiveness - TMBS): } Given various $\beta$, what are the optimal subgraphs in terms of size, and \textsc{Tolerant Index Count}, found by our method and baselines?
    \item \textbf{RQ2 (Effectiveness - MBS): } What are our method's and baselines' performances in finding the maximum balanced subgraph?
    \item \textbf{RQ3 (Efficiency): } What are the runtimes for our method and the baselines, and how do they scale with very large networks?
    \item \textbf{RQ4 (Generalizability): } Can our model and method be adapted to other related tasks in the signed networks?
\end{itemize}
We also conduct three experiments shown in appendix~\ref{appendix:experiments}, which are designed to validate our hypotheses, determine optimal hyperparameters, and assess stability.   


\subsection{Experimental Setting}
\label{sec:ExperimentalSetting}

\paragraph{Baselines}
We compare our method with the baselines from highly related works (\textit{e.g.,} MBS and polarized community detection), including spectral and other heuristic methods.

Note that since finding tolerant balanced subgraphs is typically more challenging than previous community detection tasks in signed networks, we also adapt our TBC-relaxation in Section~\ref{sec:relaxation} for all baseline methods to ensure that they can return the subgraphs satisfying the tolerance constraint.
We summarize the core ideas of each baseline method as follows.

\begin{itemize}[leftmargin=*]
    \item \textsc{Eigen}: Based on a spectral method from~\cite{eigen}, we first compute $\mathbf{v}$, the eigenvector of the signed adjacency matrix $A$ corresponding to the largest eigen value $\lambda$. 
    For each vertex $i$, if a Bernoulli experiment with success probability $p = |\mathbf{v}_i|$ is successful, we assign vertex $i$ a color determined by $\text{sgn}(\mathbf{v}_i)$. 
    The maximum connected components with non-negative $\hat{\Phi}$ serve as solutions to \textsc{Problem}~\ref{p3} and \textsc{Problem}~\ref{p4}, while the connected component with the maximum $\hat{\Phi}$ provides the solution to \textsc{Problem}~\ref{p5}.
    
    \item \textsc{Timbal}: Based on a state-of-the-art method for MBS from~\cite{ordozgoiti2020finding}, we start from the entire graph and repeatedly remove vertices from it, guided by an eigenvalue approximation outlined in their paper.
    Specifically, we compute $\hat{\Phi}$ by the optimal coloring among two non-trivial methods~\cite{eigen,GULPINAR2004359} for TMBS problems.
    
    \item \textsc{GreSt}: This method combines a classical and effective algorithm in dense subgraph~\cite{greedydense} and a heuristic coloring method for signed networks by spanning tree~\cite{GULPINAR2004359}. Firstly, we determine the coloring of the entire graph by a random spanning tree. We then repeatedly remove vertices maximizing $\hat{\Phi}$ of the remaining graph.
    A min-heap is used to speed up the process of finding vertices. To acquire the solution to different tasks, we need to store the deletion operations. Then we can restore the graph by undoing deletions one by one. Such reversal allows us to keep track of the optimal connected component efficiently, instead of scanning the whole graph after each deletion. 
    
    \item \textsc{RH (LS)}: Instead of using region-based sampling, we execute our local search in Section~\ref{sec:localsearch} for all starting vertices and select the optimal solutions, to investigate the effectiveness of the regional-based sampling. This method may provide a better solution than \textsc{RH} but is not efficient.

\end{itemize}

\begin{table}[t!]
\caption{Signed network datasets used in experiments, including the number of vertices ($|V|$) and edges ($|E|$), the ratio of negative edges ($\rho^{-}$), and the ratio of non-zero elements ($\delta$).}
\resizebox{0.9\linewidth}{!}{
\begin{tabular}{|c|rrrr|} 
\hline
Dataset            & \multicolumn{1}{c}{\footnotesize $|V|$} & \multicolumn{1}{c}{\footnotesize $|E^{+}\cup E^{-}|$} & \multicolumn{1}{c}{\tiny$\rho^{-}=\frac{|E^{-}|}{|E^{+}\cup E^{-}|}$} & \multicolumn{1}{c|}{\tiny$\delta=\frac{2|E^{+}\cup E^{-}|}{|V|(|V|-1)}$}  \\ 
\hline
\textsc{Bitcoin}   & 5k                        & 21k                                     & 0.15                                                             & 1.2e-03                                                              \\
\textsc{Epinions}  & 131k                      & 711k                                    & 0.17                                                             & 8.2e-05                                                              \\
\textsc{Slashdot}  & 82k                       & 500k                                    & 0.23                                                             & 1.4e-04                                                              \\
\textsc{Twitter}   & 10k                       & 251k                                    & 0.05                                                             & 4.2e-03                                                              \\
\textsc{Conflict}  & 116k                      & 2M                                      & 0.62                                                             & 2.9e-04                                                              \\
\textsc{Elections} & 7k                        & 100k                                    & 0.22                                                             & 3.9e-03                                                              \\
\textsc{Politics}  & 138k                      & 715k                                    & 0.12                                                             & 7.4e-05                                                              \\ 
\hline
\textsc{Growth}    & 1.87M                            & 40M                                         & 0.50                                                                   &  2.3e-05                                                                    \\
\hline
\end{tabular}
}
\label{tab:dataset}
\end{table}

\paragraph{Datasets}
We select 7 publicly-available real-world signed networks\footnote{From \url{konect.cc} and \url{snap.stanford.edu}}, \textsc{Bitcoin}, \textsc{Epinions}, \textsc{Slashdot}, \textsc{Twitter}, \textsc{Conflict}, \textsc{Elections}, and \textsc{Politics}, which were widely used in previous related works~\cite{ordozgoiti2020finding,eigen, CohesivelyPolarized}.
In addition, to investigate our methods' scalability, we generate \textsc{Growth}, a very large signed network induced from the real temporal network \textsc{Wikipedia-growth}\footnote{http://konect.cc/networks/wikipedia-growth}.
Specifically, we select a threshold $\tau$, and give positive signs for the edges with time stamp $t(e) \geq \tau$ and negative signs for the edges with $t(e) < \tau$.
By using a proper $\tau$, the ratio of negative edges of the induced graph can be $0.5$. 
In this network, a balanced graph contains two communities, where the edges within each community are recently formed, while the crossing edges are relatively old.
Detailed information for each dataset can be found in Table~\ref{tab:dataset}.

All experiments are conducted on a Ubuntu 22.04 LTS workstation, equipped with a 12th Gen Intel(R) Core(TM) i9-12900HX. 
We set hyperparameters $T = 20, p = 0.8, \mathcal{C} = 1.5$ for all experiments.

\subsection{Finding Balanced Subgraphs with Tolerance}
\label{sec:beta-eval}
We aim to identify sizable and polarized subgraphs while taking the tolerance into account (\textbf{RQ1}).
Firstly, by running algorithms in Section~\ref{sec:ExperimentalSetting}, we keep track of the optimal subgraph that we have found to \textsc{Problem}~\ref{p3}, \textsc{Problem}~\ref{p4}, and \textsc{Problem}~\ref{p5} respectively.
Noticeably, we do not modify either \textsc{RH} or \textsc{RH (LS)} for \textsc{Problem}~\ref{p3} and \textsc{Problem}~\ref{p4}. That is to say, we directly use our result on solving \textsc{Problem}~\ref{p5} to compare with other algorithms which are tailored for either \textsc{Problem}~\ref{p3} or \textsc{Problem}~\ref{p4}, which demonstrates our proposed \textsc{Problem}~\ref{p5} is a good approximation for the other two problems.

The comparison results for these three problems are shown in Figure~\ref{fig:beta-result-V}, Figure~\ref{fig:beta-result-E}, and Figure~\ref{fig:beta-result-T}, respectively.
In our experiments, we consider $15$ different tolerance parameters ($\beta_i = 2^{-i/2}$ for all $2 \leq i \leq 16$), where $i = 2$ implies allowance for all connected subgraphs, while $i = 16$ means that the found subgraphs are almost strictly balanced.
Since \textsc{Problem}~\ref{p3} and \textsc{Problem}~\ref{p4} become trivial when $\beta \geq \frac{L(G)}{|E^{+} \cup E^{-}|}$, we shade the corresponding ranges\footnote{As computing the exact value of $L(G)$ is \textbf{NP}-hard, we cannot calculate the exact range of $\beta$ corresponding to the trivial cases. Instead, we calculate an upper bound $\overline{L}$ of $L(G)$ by a promising valid solution and shade only the subrange $\beta \ge \frac{\overline{L}}{|E^{+} \cup E^{-}|}$.}  in Figure~\ref{fig:beta-result-V} and Figure~\ref{fig:beta-result-E}. 
We omit the results for methods that cannot finish in 100,000 seconds. 

Observing the experiment results, our proposed method \textsc{RH} outperforms all other baselines significantly in all three problems. 
This also demonstrates that our proposed tolerance model is a more realistic model for the general case of balanced signed graph models. 
In addition, we can see that \textsc{RH} produces results that are close to \textsc{RH (LS)}. 
Therefore, we manage to empirically support our hypothesis in Section~\ref{sec:sampling} in real-world data.

\begin{figure*}[h!]
\vspace{-1mm}
  \centering
  \includegraphics[width=0.83\linewidth]{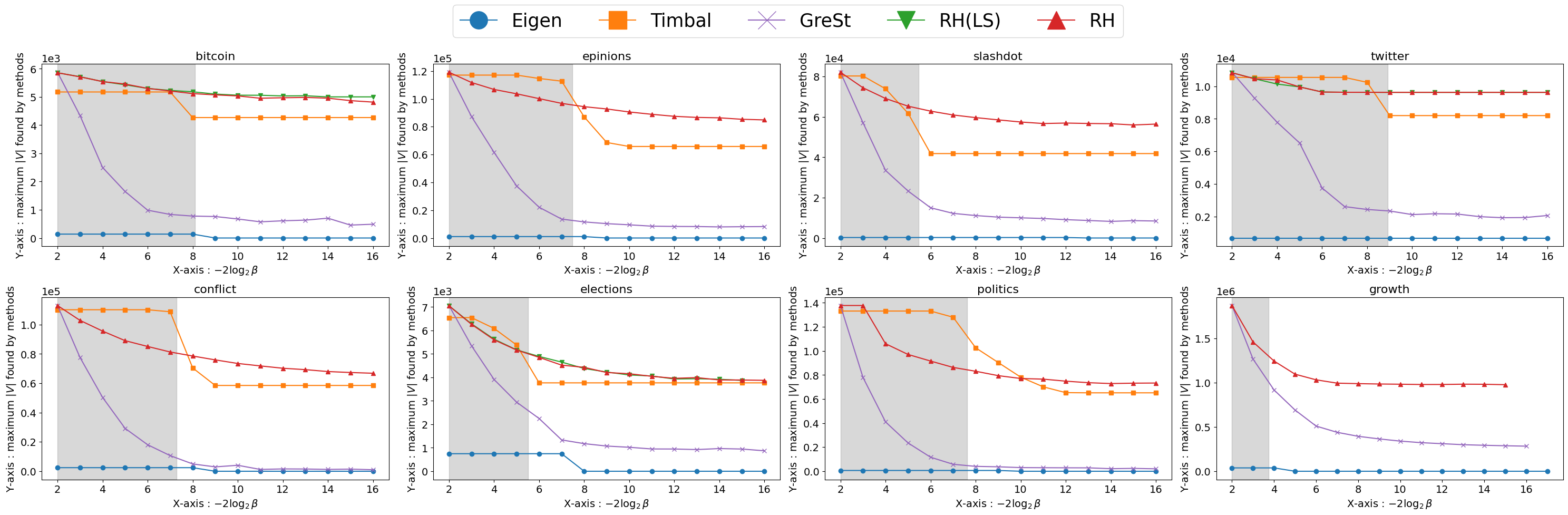}
  \caption{\textsc{Problem}~\ref{p3}: Comparing maximum tolerantly balanced subgraphs in vertex cardinality ($|V|$) across various tolerances ($\beta = 2^{-x/2}$),
  where the part corresponding to trivial $\beta$ values for this problem has been shaded.
  }
  \label{fig:beta-result-V}
\end{figure*}
\begin{figure*}[h!]
\vspace{-1mm}
  \centering
  \includegraphics[width=0.83\linewidth]{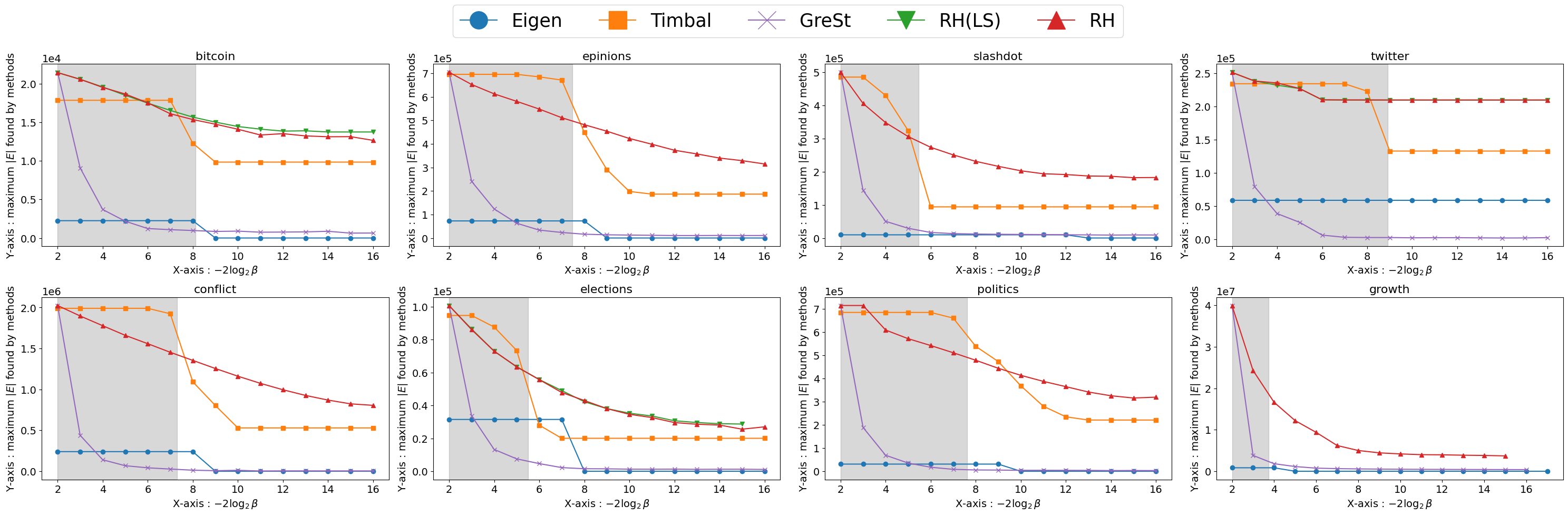}
  \caption{\textsc{Probelm}~\ref{p4}: Comparing maximum tolerantly balanced subgraphs in edge cardinality ($|E^{+} \cup E^{-}|$) across various tolerances ($\beta = 2^{-x/2}$),
  where the part corresponding to trivial $\beta$ values for this problem has been shaded.
  }
  \label{fig:beta-result-E}
\end{figure*}
\begin{figure*}[h!]
\vspace{-1mm}
  \centering
  \includegraphics[width=0.83\linewidth]{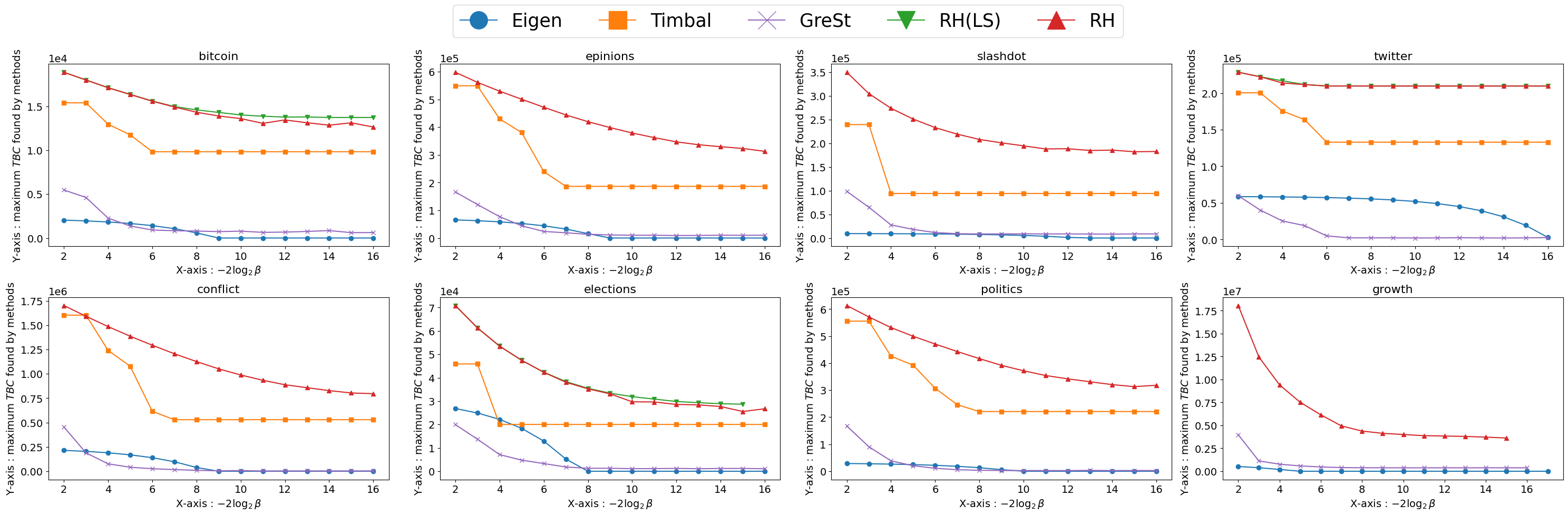}
  \caption{\textsc{Probelm}~\ref{p5}: Comparing maximum tolerantly balanced subgraphs in TBC ($\hat{\Phi}$) across various tolerances ($\beta = 2^{-x/2}$).}
  \label{fig:beta-result-T}
\end{figure*}
\subsection{Finding Strictly Balanced Subgraphs}
Finding the Maximum Balanced Subgraph (Problem~\ref{p1} and ~\ref{p2}) is an important and well-studied problem in signed networks~\cite{ordozgoiti2020finding} (\textbf{RQ2}).
As previously discussed, MBS problems are special forms of proposed TMBS problems (Problem~\ref{p3} and ~\ref{p4}) by setting $\beta < \frac{1}{|E^{+}\cup E^{-}|}$.

Table~\ref{tab:MBS} shows the results of our case study on the MBS problems, where we omit the results for \textsc{Eigen} since it cannot return subgraphs other than a single vertex. 
\textsc{RH} produces significantly better results than the previous state-of-the-art method \textsc{Timbal}, which is specifically designed for the MBS problems while ours is not.

\begin{table*}[t!]
\caption{Case study: Largest strictly balanced subgraph (in terms of $|V|$ or $|E|$) found by each method for each dataset ($\beta < \frac{1}{|E^{+}\cup E^{-}|}$), where {\em NA} denotes the corresponding method cannot finish in 100,000 seconds on the respective dataset.}
\centering
\resizebox{0.9\linewidth}{!}{
\label{tab:MBS}
\begin{tabular}{|c|rr|rr|rr|rr|rr|rr|rr|rr|} 
\hline
\multirow{2}{*}{Method} & \multicolumn{2}{c|}{\textsc{Bitcoin}}                  & \multicolumn{2}{c|}{\textsc{Epinions}}                 & \multicolumn{2}{c|}{\textsc{Slashdot}}                 & \multicolumn{2}{c|}{\textsc{Twitter}}                  & \multicolumn{2}{c|}{\textsc{Conflict}}                 & \multicolumn{2}{c|}{\textsc{Elections}}                & \multicolumn{2}{c|}{\textsc{Politics}}                 & \multicolumn{2}{c|}{\textsc{Growth}}                    \\
                        & \multicolumn{1}{c}{$|V|$} & \multicolumn{1}{c|}{$|E|$} & \multicolumn{1}{c}{$|V|$} & \multicolumn{1}{c|}{$|E|$} & \multicolumn{1}{c}{$|V|$} & \multicolumn{1}{c|}{$|E|$} & \multicolumn{1}{c}{$|V|$} & \multicolumn{1}{c|}{$|E|$} & \multicolumn{1}{c}{$|V|$} & \multicolumn{1}{c|}{$|E|$} & \multicolumn{1}{c}{$|V|$} & \multicolumn{1}{c|}{$|E|$} & \multicolumn{1}{c}{$|V|$} & \multicolumn{1}{c|}{$|E|$} & \multicolumn{1}{c}{$|V|$} & \multicolumn{1}{c|}{$|E|$}  \\ 
\hline
\textsc{Timbal}         & 4,050                     & 9,757                       & \underline{65,879}                    & \underline{190,751}                    & \underline{43,742}                    & \underline{101,590}                    & 8,636                     & 129,927                    & \underline{51,463}                    & \underline{361,663}                    & 3,579                     & 17,633                     & \underline{65,188}                    & \underline{230,529}                    & \textit{NA}               & \textit{NA}                 \\
\textsc{GreSt}          & 571                       & 713                        & 7,078                     & 9,229                      & 7,923                     & 8,465                      & 1,895                     & 2,277                      & 759                       & 4,885                      & 935                       & 1,232                      & 2,371                      & 3,155                      & \underline{272,621}                   & \underline{375,603}                     \\
\textsc{RH (LS)}        & \textbf{5,002}            & \textbf{13,746}            & \textit{NA}               & \textit{NA}                & \textit{NA}               & \textit{NA}                & \textbf{9,628}            & \textbf{209,633}           & \textit{NA}               & \textit{NA}                & \textbf{3,970}            & \textbf{28,453}            & \textit{NA}               & \textit{NA}                & \textit{NA}               & \textit{NA}                 \\
\textsc{RH}             & \underline{4,935}                     & \underline{13,050}                     & \textbf{84,165}           & \textbf{302,152}           & \textbf{55,968}           & \textbf{181,069}           & \textbf{9,628}            & \textbf{209,633}           & \textbf{65,468}           & \textbf{746,640}           & \underline{3,926}                     & \underline{26,478}                     & \textbf{72,431}           & \textbf{317,384}           & \textbf{999,504}          & \textbf{3,415,441}          \\
\hline
\end{tabular}
}
\end{table*}

\subsection{Running Time Analysis}
\label{sec:runningtime}

This experiment is designed to study the efficiency and scalability of our method (\textbf{RQ3}). As shown in Table~\ref{tab:time}, when the graph size is up to ($|V| = 1.87$M, $|E^{+}\cup E^{-}| = 40$M), \textsc{RH} can still produce reasonable results within 1,000 seconds. On the other hand, the other four compared methods either produce much worse results while taking longer time or even fail to finish execution in 100,000 seconds. In all, the empirical results demonstrate the efficiency of our method in real-world data in addition to its asymptotic theoretical complexity.
\subsection{Solving the 2PC Problem}
\label{sec:2pc}
This experiment aims to evaluate the generalizability when adapting RH and our tolerant model to different variants of polarity community detection (\textbf{RQ4}).
In addition to balance-related problems, \textsc{2-Polarized-Communities} (2PC), proposed by Bonchi and Galimberti~\cite{eigen}, serves as another model for polarized community detection.
In their model, the measurement of polarity is penalized by the size of the solution:

\vspace{-0.5ex}
\begin{problem}[2PC]
\label{p6}
Given a signed graph $G$ with signed adjacency matrix $A$, find a vector $x \in \{-1, 0, 1\}$ that maximizes $\frac{x^{T}Ax}{x^{T}x}$.
\end{problem}
\vspace{-0.5ex}

To solve the 2PC problem by our tolerance balance model, we add an additional penalty term to the solution size in the tolerant balance count:
We define $\hat{\Phi}'(G, \beta, \mathcal{X}, \rho) = \hat{\Phi}(G, \beta, \mathcal{X}) - \rho|V|$ to serve as a new object function in our \text{RH} algorithm. Throughout the algorithm, we set $\beta = \frac{1}{2}$. This is because when $\beta = \frac{1}{2}$, if $\hat{\Phi}'\geq 0$, the polarity is no less than $\rho$. Therefore, \textsc{Problem}~\ref{p6} can be solved by the new \textsc{RH} algorithm, where we apply an iterative mechanism on $\rho$. The algorithm details can be found in Appendix~\ref{appedix:2pc}.

We compare RH's results with the optimal result computed by the two methods in Bonchi and Galimberti's paper~\cite{eigen} (\textit{i.e.,} \textsc{Eigen} and \textsc{Greedy}). 
The results are shown in Table~\ref{tab:MBDS}.
RH's slight modifications efficiently yield promising communities, demonstrating the adaptability and effectiveness of our tolerant balance model and algorithm in varied polarity community detection scenarios.

\begin{table}[H]
\caption{Mean running time (s) for each method on the various $\beta$ in Section~\ref{sec:beta-eval}.}
\resizebox{0.9\linewidth}{!}{
\begin{tabular}{|c|rrrrr|} 
\hline
Dataset            & \multicolumn{1}{c}{\textsc{Eigen}} & \multicolumn{1}{c}{\textsc{Timbal}} & \multicolumn{1}{c}{\textsc{GreSt}} & \multicolumn{1}{c}{\textsc{RH (LS)}} & \multicolumn{1}{c|}{\textsc{RH}}  \\ 
\hline
\textsc{Bitcoin}   & 0.979                              & 2.670                               & 0.018                              & 135.040                              & 0.057                             \\
\textsc{Epinions}  & 9.213                              & 220.919                             & 0.319                              & $\geq 100,000$                       & 4.162                             \\
\textsc{Slashdot}  & 7.287                              & 226.805                             & 0.218                              & $\geq 100,000$                       & 3.175                             \\
\textsc{Twitter}   & 1.704                              & 7.306                               & 0.085                              & 2,516.671                             & 0.493                             \\
\textsc{Conflict}  & 12.899                             & 310.736                             & 0.781                              & $\geq 100,000$                       & 9.031                             \\
\textsc{Elections} & 1.022                              & 4.542                               & 0.039                              & 675.484                              & 0.288                             \\
\textsc{Politics}  & 10.003                             & 143.605                             & 0.304                              & $\geq 100,000$                       & 4.782                             \\
\textsc{Growth}    & 297.299                            & $\geq 100,000$                      & 48.915                             & $\geq 100,000$                       & 813.099                           \\
\hline
\end{tabular}
}
\label{tab:time}
\end{table}

\begin{table}[H]
\centering
\caption{Optimal subgraphs in the term of 2PC-polarity ($\frac{x^{T}Ax}{x^{T}x}$) found by each method, associated with running time.}
\label{tab:MBDS}
\resizebox{0.95\linewidth}{!}{
\begin{tabular}{|c|rr|rr|rr|} 
\hline
\multirow{2}{*}{Dataset} & \multicolumn{2}{c|}{\textsc{RH}}                             & \multicolumn{2}{c|}{\textsc{Eigen}}                          & \multicolumn{2}{c|}{\textsc{Greedy}}                          \\
                         & \multicolumn{1}{c}{polarity} & \multicolumn{1}{c|}{time (s)} & \multicolumn{1}{c}{polarity} & \multicolumn{1}{c|}{time (s)} & \multicolumn{1}{c}{polarity} & \multicolumn{1}{c|}{time (s)}  \\ 
\hline
\textsc{Bitcoin}         & \textbf{14.82}               & \textbf{0.035}                & 14.76                        & 0.706                         & 14.50                        & 0.999                          \\
\textsc{Epinions}        & \textbf{85.27}               & \textbf{0.873}                & 64.36                        & 10.271                        & 85.15                        & 428.625                        \\
\textsc{Slashdot}        & 41.21                        & \textbf{0.584}                & 39.85                        & 7.325                         & \textbf{41.36}               & 154.383                        \\
\textsc{Twitter}         & \textbf{87.21}               & \textbf{0.818}                & 87.04                        & 1.717                         & 86.97                        & 7.853                          \\
\textsc{Conflict}        & \textbf{94.27}               & 15.134                        & 87.83                        & \textbf{13.144}               & 63.99                        & 552.629                        \\
\textsc{Elections}       & 36.27                        & \textbf{0.395}                & 35.87                        & 0.950                         & \textbf{36.34}               & 2.223                          \\
\textsc{Politics}        & 44.95                        & \textbf{0.862}                & 44.22                        & 9.931                         & \textbf{45.01}               & 475.353                        \\
\hline
\end{tabular}
}
\end{table}
\section{CONCLUSIONS}
\label{sec:conc}
This paper presents a new and versatile model to identify polarized communities in signed graphs. 
The model accommodates inherent imbalances in polarized communities through a tolerance feature.
Additionally, we propose a region-based heuristic algorithm.
Through a wide variety of experiments on graphs of up to 40 M edges, it demonstrates effectiveness and efficiency beyond the state-of-the-art methods in addressing both traditional and generalized MBS problems.
We also adapt our model and algorithm to 2PC, a related but essentially different task, further verifying their generalizability.

\section{REPRODUCIBILITY}
Codes for our methods and for reproducing all the experimental results are available at GitHub\footnote{https://github.com/joyemang33/RH-TMBS}.

\begin{acks}
Chenhao Ma was partially supported by NSFC under Grant 62302421, Basic and Applied Basic Research Fund in Guangdong Province under Grant 2023A1515011280, and the Guangdong Provincial Key Laboratory of Big Data Computing, the Chinese University of Hong Kong, Shenzhen. We thank Yixiang Fang (The Chinese University of Hong Kong, Shenzhen), Qingyu Shi (Hailiang Foreign Language School), and Xinwen Zhang (The Chinese University of Hong Kong, Shenzhen) for valuable advice on this project. 
\end{acks}

\clearpage


\bibliographystyle{ACM-Reference-Format}
\bibliography{References}

\appendix
\newpage
\section{OMITTED PROOF AND ALGORITHM}
\subsection{Omitted Proof in Section~\ref{sec:preliminaries}}
\label{appendix:sec3}
\begin{proof}[Proof of lemma~\ref{lemma:hardness}]
We prove the lemma by a reduction from the calculation of the Frustration Index.
If we can decide whether $G$ is balanced under $\beta$-tolerance within the polynomial runtime, we will be able to decide whether $L(G)$ is greater than any guessing value $k$ by setting $\beta = \frac{k}{|E^{+} \cup E^{-}|}$.
Combined with a binary search, we can compute the Frustration Index of $G$ within the polynomial runtime, which has been proven as a \textbf{NP}-hard problem~\cite{agarwal2005log}.
\end{proof}

\begin{proof}[Proof of lemma~\ref{lemma:samequestion}]
We first show that the solutions to \textsc{Problem}~\ref{p2} is a $\frac{1}{\Delta}$-approximations for the \textsc{Problem}~\ref{p1}, and the relations between \textsc{Problem}~\ref{p3} and \textsc{Problem}~\ref{p4} can be proven similarly.

Given a signed graph $G$, let the optimal subgraph to \textsc{Problem}~\ref{p1} be $G_1$ with $n_1$ vertices and $m_1$ edges, and the optimal subgraph to \textsc{Problem}~\ref{p2} be $G_2$ with $n_2$ vertices and $m_2$ edges. 
Since $G_1$ is connected, we have 
\begin{align}
\label{m1n1}
    m_1 \geq n_1
\end{align}
Since $\Delta$ is the maximum degree of vertices in $G$, we have 
\begin{align}
\label{m2n2}
    n_2 \geq \frac{m_2}{\Delta}
\end{align}
Combined with (\ref{m1n1}) and (\ref{m2n2}), we have $n_2 \geq \frac{m_2}{\Delta} \geq \frac{m_1}{\Delta} \geq \frac{n_1}{\Delta}$, which implies $G_2$ is a $\frac{1}{\Delta}$-approximations for the \textsc{Problem}~\ref{p1}.
\end{proof}

\subsection{Omitted Lemma and Proof in Section~\ref{sec:algo}}
\label{appendix:sec4}

\begin{proof}[Proof of lemma~\ref{lemma:upperbound}]
The lemma is equivalent to that:

$\sum_{(i, j) \in E^{+}} \mathbbm{1}\{x_i \neq x_j\} + \sum_{(i, j) \in E^{-}} \mathbbm{1}\{x_i = x_j\}$ is a tight upper bound on $L(G)$.

First, given a coloring $\mathcal{X}$, if we remove all edges that let the indicator functions be true from $G$, the remaining graph will be balanced for each connected component due to \textsc{Definition}~\ref{def:balanced}.
This implies that $\sum_{(i, j) \in E^{+}} \mathbbm{1}\{x_i \neq x_j\} + \sum_{(i, j) \in E^{-}} \mathbbm{1}\{x_i = x_j\}$ is no less than the $L(G)$, which is the minimum number of edges to be removed.

We then show the bound is tight.
Consider the remaining graph $G'$ from removing all edges contributing to $L(G)$.
Since all connected components of $G'$ are balanced, we can find a partition $V = V_1 \cup V_2$ and $V_1 \neq V_2$ such that if we assign $x_i = 0$ for all $i \in V_1$ and $x_j = 1$ for all $j \in V_2$, the value of $\sum_{(i, j) \in E^{+}} \mathbbm{1}\{x_i \neq x_j\} + \sum_{(i, j) \in E^{-}} \mathbbm{1}\{x_i = x_j\}$ will become to zero for $G'$.
This implies the same value for $G$ under the coloring is equal to $L(G)$ and thus the bound is tight.
\end{proof}

\begin{proof}[Proof of lemma~\ref{lemma:sample}]
Suppose the optimal subgraph found by Algorithm~\ref{alg:hillclimbing} among all starting vertex is $G_{opt}$ and the probability of Algorithm~\ref{alg:main} finding a $(1-\epsilon)$-optimal subgraph in one iteration is $\Pr[\text{Sucess}]$.
By \textsc{Hypothesis}~\ref{h1}, we have \textit{(i)} $\Pr[\text{Sucess}] \geq \frac{|V(G_{opt})|}{2|V(G)|}$. 
Let $\mathcal{C} = \frac{\sum_{i=1}^{k} |V(G_i)|}{|V(G)|}$, and then we have $\sum_{i=1}^{k} |V(G_{i})| = \mathcal{C}|V(G)|$.
By \textsc{Hypothesis}~\ref{h2}, $\sum_{i=1}^{k} 2|V(G_{opt})| \geq \mathcal{C}|V(G)|$, and thus  \textit{(ii)} $k \geq \frac{\mathcal{C}}{2}\frac{|V(G)|}{|V{(G_{opt})}|}$.
Combining \textit{(i)} and \textit{(ii)}, we have $\expec{}{X} = k\Pr[\text{Sucess}]\geq \frac{\mathcal{C}}{2}\frac{|V(G)|}{|V{(G_{opt})}|} \cdot \frac{|V(G_{opt})|}{2|V(G)|} = \frac{\mathcal{C}}{4} = \Omega(\mathcal{C})$.
\end{proof}

\begin{lemma}
With high probability, for all valid $i$, after the $i$-th iteration of the on Line~\ref{search:loop} of Algorithm~\ref{alg:hillclimbing}, $|S|=\Omega(\frac{i}{\log n})$. 

\label{lemma:stop}
\end{lemma}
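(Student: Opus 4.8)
The plan is to track the random process $|S|$ across the loop on Line~\ref{search:loop} and show it grows almost linearly in the iteration count. I would start from $|S|=1$ and classify each iteration by the operation actually executed: an \VI\ raises $|S|$ by one, a \CF\ leaves it unchanged, and a \VD\ lowers it by one. Hence after $i$ iterations $|S| = 1 + (\#\VI) - (\#\VD)$, so only deletions can shrink the set, while the default branch on Line~\ref{search:getvi} always proposes an insertion. The whole statement therefore reduces to showing that deletions are rare enough that the steady stream of insertions is not cancelled out, leaving $|S| = \Omega(i/\log n)$.

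The heart of the argument is a bound on the number of \VD\ operations, and this is exactly what the gating probability $\frac{p\log|S|}{|S|}$ on Line~\ref{search:getvd1} is designed to control. A deletion can occur only when this coin fires, so conditioned on the history the probability that iteration $j$ performs a \VD\ is at most $\frac{p\log|S_{j-1}|}{|S_{j-1}|}\le \frac{p\log n}{|S_{j-1}|}$, since $|S|\le n$. I would make the state dependence tractable by a stopping-time / stochastic-domination argument organized around doubling scales: partition the run according to whether $|S|\in[2^k,2^{k+1})$, observe that inside scale $k$ each iteration deletes with probability at most $\frac{p\log n}{2^{k}}$, and apply a Chernoff bound (with a union bound over the $O(\log n)$ scales and over the iteration index $i$) to conclude that, with high probability, the number of deletions accumulated within any scale is dwarfed by the insertions performed there. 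Consequently $|S|$ cannot be dragged back down through a scale, and it advances essentially one insertion per iteration up to the logarithmic slack; this yields $|S|=\Omega(i/\log n)$ simultaneously for all valid $i$.

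The main obstacle is twofold. First, the deletion probability depends on the current, random value of $|S|$, so the per-step increments are not independent and a plain Chernoff bound does not apply directly; the doubling-scale decomposition together with a supermartingale on the number of deletions is what makes the dependence manageable, and the factor $\log n$ in the claimed bound is the slack that absorbs both the lower-order deletion count and the overhead of the coin. Second, and more delicate, is ruling out that \CF\ operations preempt the default insertion so often that $|S|$ stalls: one must argue that, although a flip is considered with constant probability $p$, it is executed only when it strictly beats the best available insertion, so that over any window long enough to matter the insertions still accumulate. Handling this interaction between the flip branch and the insertion branch, rather than the deletion bound itself, is where the real care is needed.
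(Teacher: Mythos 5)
Your reduction $|S| = 1 + \#\VI - \#\VD$ is sound, and your second ``obstacle'' (that \CF\ might preempt insertions) is actually a non-issue: since a flip is only considered when the $z<p$ coin fires, you may pessimistically assume it wins \emph{whenever} considered, and an insertion is still executed with probability at least $(1-p)\bigl(1-\frac{p\log|S|}{|S|}\bigr)$ in every iteration. No control over how often a flip beats the best insertion is needed (nor is it possible, since that comparison is decided adversarially by the graph structure). This worst-case accounting is exactly how the paper models the process, via $p^+(x)=(1-p)(1-p\tfrac{\log x}{x})$, $p^0(x)=p(1-p\tfrac{\log x}{x})$, and $p^-(x)=p\tfrac{\log x}{x}$.

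The genuine gap is in your central step: ``deletions are dwarfed by insertions within every scale.'' This is false at the small, constant-size scales, which is precisely where the process starts ($|S|=1$). The deletion probability $\frac{p\log|S|}{|S|}$ is not small for constant $|S|$; with a constant $p$ (the paper uses $p=0.8$) it can \emph{exceed} the guaranteed insertion rate $(1-p)\bigl(1-\frac{p\log|S|}{|S|}\bigr)$ --- for instance at $|S|=2$ one gets roughly $0.28$ versus $0.14$ --- so the walk has negative worst-case drift up to a constant threshold $x_0$ (the paper solves $\frac{\log x_0}{x_0}=\frac{1-p}{3p-p^2}$). No Chernoff or union bound over scales can repair a claim that already fails in expectation at those scales. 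What is needed instead is an escape/excursion argument: bounded expected time to climb from below $x_0$ up to $x_0$, then a gambler's-ruin comparison with a $q$-biased walk showing that, once above $x_0$, the process never returns below it with constant probability, and only then concentration (Hoeffding) for the positive-drift phase. This bootstrap is the bulk of the paper's proof and is entirely absent from yours. Relatedly, your explanation of the $\log n$ factor (``slack absorbing the lower-order deletion count and the coin overhead'') is not the actual mechanism: the loss arises because each escape attempt succeeds only with constant probability, so the paper partitions the $i$ iterations into $\Theta(\log n)$ intervals of length $i/\log n$ and argues that with high probability at least one interval succeeds; since the successful interval may be the last one, only $\Omega(i/\log n)$ --- not the $\Omega(i)$ your large-scale argument would suggest --- can be guaranteed.
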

\begin{proof}

    Let $x$ denote the value of $|S|$. The initial value of $x$ is $1$. $x$ changes according to the following rules in each iteration:
    \begin{itemize}[leftmargin=*]
        \item With probability $(1-p)(1-p\frac{\log x}{x})$, $x$ increases by $1$;
        \item With probability $p\frac{\log x}{x}$, x increases by $1$, $0$, or $-1$;
        \item With probability $p(1-p\frac{\log x}{x})$, x increases by $1$, or $0$;
        \item The process ends when $x$ reaches $n$.
    \end{itemize}
    Suppose the process lasts for $t$ steps. We shall prove that with high probability, it holds simultaneously for all $i$ from $1$ to $t$ that the value of $x$ after the first $i$ steps is $\Omega(\frac{i}{\log n})$. 

    We denote $(1-p)(1-p\frac{\log x}{x})$ by $p^+(x)$, $p(1-p\frac{\log x}{x})$ by $p^0(x)$, and $p\frac{\log x}{x}$ by $p^-(x)$. 
    
    Let $next(x)$ be the value of $x$ in the next step.

    Intuitively, $x$ increases over steps because $p^+(x)$ is much larger than $p^-(x)$ for large enough $x$. Thus, we will partition steps into two parts by the value of $x$. Let $x_0$ be the smallest $x$ such that $p^-(x) < p^+(x)/2$. When $x$ is at least $x_0$, $x$ increases in expectation. When $x$ is smaller than $x_0$, $x$ will reach $x_0$ soon.
    
    We can solve for such $x_0$ by the definition of $p^+(x)$ and $p^-(x)$. We have that $x_0$ is a constant depending on $p$ satisfying $\frac{\log x_0}{x_0}=\frac{1-p}{3p-p^2}.$ Thus, for any $x\ge x_0$, $\expec{}{next(x)}\ge x+\frac{1-p}{3-p}.$

    We first sketch the proof. Consider the first $i$ iterations. We partition the $i$ iterations into $\Theta(\log(n))$ intervals with $i/\log n$ length each. At the beginning of each interval, with high probability, within $O(\log n)$ iterations, $x$ reaches a value that is at least $x_0.$ After reaching $x_0$, $x$ will never go below $x_0$ during the current interval with a constant probability. Conditioning on $x$ never going below $x_0$, the expected increment of $x$ in the current interval is at least $c l$ for some constant $c$ where $l$ is the number of remaining iterations in the interval. $l$ is at least $k/\log n-O(\log n).$ Because $x$ may increase by at most $l$, we have, by Markov bound, $\pr{}{\text{$x$ increases by no more than $cl/10$}}\le \frac{10-10c}{10-c}.$ In other words, with probability at least $1-\frac{10-10c}{10-c},$ the value of $x$ at the end of the current interval is at least $cl/10.$
    
    In summary, for each interval, the event that the value of $x$ at the end of the interval is at least $cl/10$ happens with constant probability where $l$ is at least $\Omega(i/\log n).$ If the event does not happen, we move on to the beginning of the next interval. Since we have $\Omega(\log n)$ intervals, with high probability, the event happens at least once. Once the event happens, we use Hoeffding's inequality to guarantee that the value of $x$ is $\Omega(l)$ after iteration $i$. 

    Now we prove each part of the sketch in detail. 
    \begin{itemize}[leftmargin=*]
        \item At the beginning of each interval, with high probability, within $O(\log n)$ iterations, $x$ reaches a value that is at least $x_0:$ Let $c$ be the maximum expected number of iterations for the variable $x$ to reach $x_0$ from any initial value $x_{init}<x_0$. $c$ is a constant depending on $p$. By Markov bound, we have that in every $2c$ iterations, with probability at least $1/2$, there exists at least one iteration before which $x$ is at least $x_0$. Thus, in $O(\log n)$ iterations, with high probability, $x$ reaches some value at least $x_0$ at least once.
        \item After reaching $x_0$, $x$ will never go below $x_0$ in the current interval with a constant probability: We will prove a stronger proposition that with constant probability, $x$ never goes below $x_0$ after arbitrary number of iterations. For this, we consider another random process where a random variable $y$ increases with probability $q\defeq \frac{1}{2}+\frac{1-p}{2(3-p)}$ and decreases with probability $1-q=\frac{1}{2}-\frac{1-p}{2(3-p)}$. If $x$ and $y$ starts with the same value, the probability that $x$ never goes below $x_0$ is lower bounded by the probability that $y$ never goes below $x_0$, because for each value $v\ge x_0,$ if both $x$ and $y$ start at $v$, $\pr{}{\text{$x$ reaches $v+1$ before $v-1$}}\ge \pr{}{\text{$y$ reaches $v+1$ before $v-1$}}.$ (Equality holds when $p^+(v)=q$ and $p^-(v)=1-q.$) Now we may consider the random process of $y.$ Let $p_{i,j}$ ($i\in \{-1,0\}, j\in \{0,1\}$) denote the probability that $y$ eventually reaches $x_0+i$ starting with $y=x_0+j$. Then we have $p_{-1, 0}= q+(1-q)p_{-1, 1}.$ We also have $p_{-1, 1}=p_{-1, 0}p_{0, 1}=p_{-1, 0}^2.$ These solve to $p_{-1, 0}=\frac{1-\sqrt{1-4q(1-q)}}{2-2q}$ or $p_{-1, 0}=1.$ The later case is impossible for the following reason: Let $p_{i, j}(d)$ be the probability that  $y$ eventually reaches $x_0+i$ starting with $y=x_0+j$ with in $d$ iterations. Denote $\frac{1-\sqrt{1-4q(1-q)}}{2-2q}$ by $r.$ $p_{-1, 0}(1)=q \le r.$ If $p_{-1, 0}(d)\le r, $ $p_{-1, 0}(d+1)=q+(1-q)p_{-1, 1}(d)\le q+(1-q)p_{-1, 0}(d)p_{0, 1}(d)=q+(1-q)(p_{-1,0}(d))^2 \le r$ where the last inequality is by the fact that $r$ is a root of $q+(1-q)r^2=r.$

        $1-r$ lower bounds the probability that $x$ never goes below $x_0$ after arbitrary number of iterations.
        \item Conditioning on $x$ never going below $x_0$, the expected increment of $x$ in the current interval is at least $c l$ for some constant $c$ where $l$ is the number of iterations remaining in the current interval: We know that for any $x\ge x_0$, $\expec{}{next(x)}\ge x+\frac{1-p}{3-p}.$ Setting $c=\frac{1-p}{3-p}$ completes the proof.
        \item Once the value of $x$ is at least $cl/10$ at the end of some interval, $x$ will be $\Omega(i/\log n)$ after iteration $i$: Consider the following $h=cl/20$ steps after the interval at the end of which $x$ is at least $cl/10.$ At each of the $h$ steps, $x$ increases by at least $c$ in expectation because $x\ge cl/20$ during these steps. Let $x_{\texttt{before}}$ ($x_{\texttt{after}}$) be the value of $x$ before (after) the $h$ steps. We have $\expec{}{x_{\texttt{after}}}=x_{\texttt{before}}+c^2l/20.$ We use Hoeffding's inequality to lower bound $x_{\texttt{after}}.$ 
        \begin{align*}
        &\pr{}{x_{\texttt{after}} < cl/10}\\
        \le &\pr{}{|x_{\texttt{after}}-\expec{}{x_{\texttt{after}}}| > \expec{}{x_{\texttt{after}}}-cl/10}\\
        \le &\pr{}{|x_{\texttt{after}}-\expec{}{x_{\texttt{after}}}| > c^2l/20}\\
        \le & \exp(-\Omega(l)).
        \end{align*} 
        In other words, with high probability, after the $h$ steps, the value of $x$ is still at least $cl/10.$ We can repeat the argument for $i/h=O(\log n)$ times to conclude that after the first $i$ iterations, the value of $x$ is at least $cl/10.$
    \end{itemize}





\end{proof}

\subsection{Omitted Algorithm in Section~\ref{sec:2pc}}
\label{appedix:2pc}
\begin{algorithm}[h!]
\caption{\textsc{RH: Iterative mechanism for 2PC}}\label{alg:MDBS}
\KwIn{Signed graph $G$;}
\KwOut{$H \subseteq G$: the found subgraph with high 2PC polarity; $\rho:$ the 2PC polarity of $H$;}
$H \leftarrow \emptyset$; \label{2pc:inith}
$\rho \leftarrow 0$; \\ \label{2pc:initrho}
$s \leftarrow $ Sample a vertex from $V(G)$;\\ \label{2pc:firssample}
\While {true}{ \label{2pc:loop}
$\overline{H}, \hat{\rho} \leftarrow $ \texttt{Search}($G, \beta = \frac{1}{2}, \sigma = 0.9\rho, s$)\tcp*{guided by the new object function $\hat{\Phi}' = \hat{\Phi} - \sigma|S| $. hyperparameters $T$ and $p$ are omitted.} \label{2pc:search}
\If{$\hat{\rho} > \rho$}{ \label{2pc:update1}
$\rho \leftarrow \hat{\rho}$;\\ \label{2pc:update2}
$H \leftarrow \overline{H}$; \label{2pc:update3}
}
\lElse{
$s \leftarrow $ Sample a vertex from $V(G)$ \label{2pc:newsample}
}
\lIf{\text{meets stop-conditions}}{break} \label{2pc:stop}
remove all vertice with degree less than $\rho$ from $G$; \label{2pc:optimize}
}
\Return $H, \rho$; \label{2pc:return}
\end{algorithm}

To compute solutions for the 2PC problem (Problem~\ref{p6}), we design an iterative mechanism (Algorithm~\ref{alg:MDBS}) for calling the search problem (Algorithm~\ref{alg:hillclimbing}) properly, replacing Algorithm~\ref{alg:main}. 

$H$ is for storing the current optimal subgraph, initialized as $\emptyset$ (line~\ref{2pc:inith}). Our iterative strategy is related with the current optimal polarity value $\rho$. We initialize it as $0$ (line~\ref{2pc:initrho}) in the beginning.

We pick a starting vertex $s$ at first, sampling from all vertices in $V(G)$ (line~\ref{2pc:firssample}). 
Each turn, we execute a search process with specific hyperparameters to try to find a new candidate answer (line~\ref{2pc:search}). 
We chose the tolerance parameter $\beta$ to be $\frac{1}{2}$. 
Since our new modified object function is $\hat{\Phi}' = \hat{\Phi} - \sigma|S| $, we also need to pass in $\sigma$ as $0.9\rho$. 
The search process will have to return the corresponding polarity $\hat{\rho}$ along with the found subgraph. We compare $\hat{\rho}$ with the previous found $\rho$. If it is better, we update $\rho$ and $H$ correspondingly (line~\ref{2pc:update1} to ~\ref{2pc:update3}). Note that we do not sample a new starting vertex each time. Instead, we also do so when the search process does not return a better result (line~\ref{2pc:newsample}). We regard this as a sign showing that we have reached the limit of the current chosen starting vertex.

The whole process will stop when a certain condition is met (line~\ref{2pc:stop}). The detail is omitted and can be referred to in our provided implementation. There are a few optimizations of such iteration mechanisms that can help improve the performance. Note that if the current polarity is $\rho$, for any vertex with a degree less than $\rho$, removing them will not worsen the answer. Therefore, we remove all such vertices from $G$ after the $\rho$ gets updated (line~\ref{2pc:optimize}).

\section{ADDITIONAL EXPERIMENTS}
\label{appendix:experiments}
\setcounter{table}{0}
\renewcommand{\thetable}{B.\arabic{table}}

\setcounter{figure}{0}
\renewcommand{\thefigure}{B.\arabic{figure}}

\subsection{Stability Study}
In our search process (Algorithm~\ref{alg:hillclimbing}), we will execute either vertex deletion or color flipping in a certain probability $p$. 
We conduct some experiments to study how such non-incremental operations help improve the stability of our algorithm, shown in Table~\ref{tab:variance}.

We modify the algorithm \textsc{RH} to a version with only insertion operations, denoted as \textsc{RH (insertion only)}. On all datasets, we execute both algorithms for 100 rounds. As common, we compute the variance $\sigma^2$ to measure the stability. 
Here, for each dataset, we also compute the maximum ($\max \hat{\Phi}$), minimum ($\min \hat{\Phi}$), and average ($\mu$) values for tolerant balance count (TBC). 

As shown in Table~\ref{tab:variance}, we can see with these two operations, the results are more stable and consistent, which can be observed in the rightmost column.

\begin{figure*}[h!]
\vspace{-1mm}
\centering
  \includegraphics[width=1\linewidth]{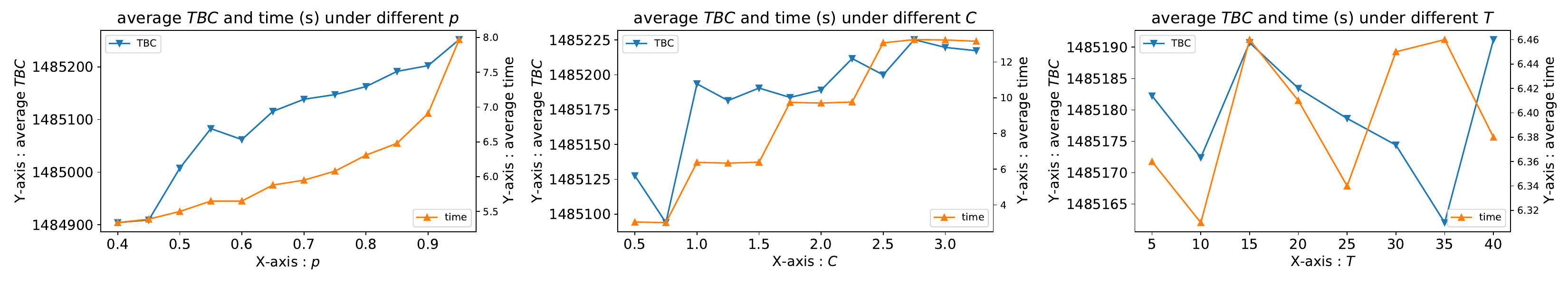}
  \caption{We study the influence of hyperparameters (\textit{i.e.,} Non-incremental probability $p$, Iteration constant $\mathcal{C}$ and Early stop turn limit $T$) on \textsc{Confilict} dataset with $\beta = 1/4$: For each hyperparameter, we execute the algorithm when setting the hyperparameter as various values while keeping other hyperparameters fixed. The results include the average Tolerant Balance Count (TBC) and running time in 25 rounds.}
  \label{fig:hyperparameters}

\end{figure*}

\begin{table*}[h!]
\centering
\caption{Stability study of Tolerant Balance Count (\textsc{TBC}) under $\beta = 1/8$ as example: We conduct 100 rounds of \textsc{RH} and \textsc{RH (insertion only)} on each dataset. The results include maximum TBC ($\max \hat{\Phi}$), minimum TBC ($\min \hat{\Phi}$), mean TBC ($\mu$), and the variance ($\sigma^2$) across the 100 rounds. Additionally, we present the variance reduction achieved by considering \textit{non-incremental operations}. }
\label{tab:variance}
\resizebox{0.75\linewidth}{!}{
\begin{tabular}{|c|rrrr|rrrr|r|} 
\hline
\multirow{2}{*}{Dataset} & \multicolumn{4}{c|}{\textsc{\textbf{RH (insertion only)}}}                                                                                           & \multicolumn{4}{c|}{\textsc{\textbf{RH}}}                                                                                                            & \multicolumn{1}{l|}{\multirow{2}{*}{\makecell[c]{\textbf{Variance} \\ \textbf{reduction}}}}  \\
                         & \multicolumn{1}{c}{$\min \hat{\Phi}$} & \multicolumn{1}{c}{$\max \hat{\Phi}$} & \multicolumn{1}{c}{$\mu$} & \multicolumn{1}{c|}{$\sigma^2$} & \multicolumn{1}{c}{$\min \hat{\Phi}$} & \multicolumn{1}{c}{$\max \hat{\Phi}$} & \multicolumn{1}{c}{$\mu$} & \multicolumn{1}{c|}{$\sigma^2$} & \multicolumn{1}{l|}{}                                        \\ 
\hline
\textsc{Bitcoin}         & 14,864                                & 15,523                                & 15,499                    & 4,977                           & 15,578                                & 15,619                                & 15,604                    & 25                              & \textbf{198$\times$}                                                \\
\textsc{Epinions}        & 462,237                               & 471,120                               & 470,536                   & 1,301,027                       & 472,158                               & 472,240                               & 472,195                   & 315                             & \textbf{4,129$\times$}                                               \\
\textsc{Slashdot}        & 223,054                               & 232,757                               & 232,110                   & 1,564,480                       & 231,988                               & 233,149                               & 232,957                   & 17,521                          & \textbf{88$\times$}                                                 \\
\textsc{Twitter}         & 187,549                               & 209,701                               & 208,051                   & 30,265,009                      & 197,072                               & 209,701                               & 209,097                   & 4,455,495                       & \textbf{5.7$\times$}                                                \\
\textsc{Conflict}        & 1,284,391                             & 1,289,332                             & 1,288,405                 & 430,597                         & 1,292,253                             & 1,293,741                             & 1,293,318                 & 84,706                          & \textbf{4$\times$}                                                  \\
\textsc{Elections}       & 40,819                                & 42,187                                & 41,981                    & 33,946                          & 42,206                                & 42,358                                & 42,276                    & 1,031                           & \textbf{29$\times$}                                                 \\
\textsc{Politics}        & 470,347                               & 470,764                               & 470,576                   & 4,289                           & 470,534                               & 470,905                               & 470,642                   & 4,042                           & \textbf{0.06$\times$}                                               \\
\hline
\end{tabular}
}
\end{table*}

\subsection{Statistical Hypothesis Testing}

In our sampling strategy and performance analysis, we propose two hypotheses:~\ref {h1} and ~\ref{h2}. Here, we conduct some experiments on 3 datasets and 4 different $\beta$ to verify if these hypotheses are reasonable.

For \textsc{Hypothesis}~\ref{h1}, we compute the maximum $1-\epsilon$ such that there exists a subset $V' \subseteq V(G_{opt})$ with $\frac{|V'|}{|V(G_{opt})|} \geq \frac{1}{2}$ such that
$\Phi(G_{x}, \beta) \geq (1-\epsilon) \Phi(G_{opt}, \beta), \forall x \in V'$. 
The result is shown in Table~\ref{tab:hypothesis1}.
The hypothesis is supported by the fact that all values are very close to $1$ and not sensitive to the value of $\beta$, which indicates $\epsilon$'s lower bound approaches to $0$.

To verify \textsc{Hypothesis}~\ref{h2}, we run Algorithm~\ref{alg:hillclimbing} from every possible starting vertex $s$ (similar to RH (LS) in Section~\ref{sec:eval}). For each setting, we compute $\max \frac{|V(G_b)|}{|V(G_a)|}$ for any two starting vertices $a,b$ satisfying $\hat{\Phi}(G_a, \beta, \mathcal{X}_a) \geq \hat{\Phi}(G_b, \beta, \mathcal{X}_b)$. 
The result is shown in Table~\ref{tab:hypothesis2}, where all values are very close to $1$ and far from the bound of $2$ in the hypothesis.
Therefore, Hypothesis~\ref{h2} is also empirically supported.

\begin{table}[t]
\centering
\caption{Verifing \textsc{Hypothesis}~\ref{h1}: For each dataset, we compute the maximum $1-\epsilon$ such that there exists a subset $V' \subseteq V(G_{opt})$ with $\frac{|V'|}{|V(G_{opt})|} \geq \frac{1}{2}$ such that $\Phi(G_{x}, \beta) \geq (1-\epsilon) \Phi(G_{opt}, \beta), \forall x \in V'$ under various $\beta$. }
\label{tab:hypothesis1}
\resizebox{0.7\linewidth}{!}{
\begin{tabular}{|c|rrrr|} 
\hline
Dataset   & $\beta=\frac{1}{4}$ & $\beta=\frac{1}{8}$ & $\beta=\frac{1}{16}$ & $\beta=\frac{1}{32}$ \\
\hline
\textsc{Bitcoin}   & 1.000               & 0.999               & 0.982                & 0.948                \\
\textsc{Twitter}   & 0.987               & 1.000               & 1.000                & 1.000                \\
\textsc{Elections} & 0.999               & 0.997               & 0.987                & 0.946                \\
\hline
\end{tabular}
}
\end{table}

\begin{table}[t]
\centering
\caption{Verifing \textsc{Hypothesis}~\ref{h2}: For each dataset, we compute $\max \frac{|V(G_b)|}{|V(G_a)|}$ for any two starting vertices $a,b$ satisfying $\hat{\Phi}(G_a, \beta, \mathcal{X}_a) \geq \hat{\Phi}(G_b, \beta, \mathcal{X}_b)$ under various $\beta$.}
\label{tab:hypothesis2}
\resizebox{0.7\linewidth}{!}{
\begin{tabular}{|c|rrrr|} 
\hline
Dataset   & $\beta=\frac{1}{4}$ & $\beta=\frac{1}{8}$ & $\beta=\frac{1}{16}$ & $\beta=\frac{1}{32}$ \\
\hline
\textsc{Bitcoin}   & 1.018               & 1.014               & 1.020                & 1.037                \\
\textsc{Twitter}   & 1.053               & 1.111               & 1.094                & 1.084                \\
\textsc{Elections} & 1.009               & 1.032               & 1.040                & 1.061                \\
\hline
\end{tabular}
}
\end{table}


\subsection{Hyperparameter Analysis}
For the three hyperparameters $p,C,T$ in Algorithm~\ref{alg:main}, we conduct a series of experiments to demonstrate that our choice is natural, shown in Figure~\ref{fig:hyperparameters}.

For each hyperparameter, we fix the other two unchanged as our option in Section~\ref{sec:algo} and enumerate its value in a range. For each value, we compute the average runtime and TBC over 25 times the executions of solving on the  \textsc{Confilict} with setting $\beta = \frac{1}{4}$. A larger $p$ may lead to more non-incremental operations, thus a longer runtime is expected. Similarly, a larger $C$ leads to more executions of the search process. Therefore, as $p,C$ gets larger, the runtime and the average TBC might be getting larger as well. This can be observed from the experiment, shown in Figure~\ref{fig:hyperparameters}. 

On the other hand, although such patterns can be observed, the difference between average runtime and TBC is not significant for all three hyperparameters. This indicates that our algorithm is robust and does not tune towards the dataset.

\end{document}